\newcommand{\AC}[0]{${\cal A}_{col}$\xspace}
\newcommand{\AG}[0]{{\cal A}\xspace}
\newcommand\A{\mathcal{A}}
\title{Analyzing the Fault-Containment Time of Self-Stabilizing
  Algorithms --- A Case Study for Graph Coloring}
\author{Volker Turau\\Institute of Telematics\\
 	Hamburg University of Technology, Germany\\ turau@tuhh.de\\http://www.ti5.tuhh.de}
\date{}
\newtheorem{theorem}{Theorem}[section]
\newtheorem{lemma}[theorem]{Lemma}
\newtheorem{corollary}[theorem]{Corollary}
\newenvironment{proof}[1][Proof]{\begin{trivlist}
\item[\hskip \labelsep {\bfseries #1}]}{\end{trivlist}}
\newcommand{\qed}{\nobreak \ifvmode \relax \else
      \ifdim\lastskip<1.5em \hskip-\lastskip
      \hskip1.5em plus0em minus0.5em \fi \nobreak
      \vrule height0.75em width0.5em depth0.25em\fi}
\begin{document}

\abovedisplayskip=1ex
\abovedisplayshortskip=1ex
\belowdisplayskip=1ex
\belowdisplayshortskip=1ex
\textfloatsep=3ex
\abovecaptionskip=1ex

\maketitle

\begin{abstract}
  The paper presents techniques to derive upper bounds for the mean
  time to recover from a single fault for self-stabilizing algorithms
  in the message passing model. For a new $\Delta+1$-coloring
  algorithm we analytically derive a bound for the mean time to
  recover and show that the variance is bounded by a small constant
  independent of the network’s size. For the class of
  bounded-independence graphs (e.g.\ unit disc graphs) all containment
  metrics are in $O(1)$.
\end{abstract}

\section{Introduction}
Fault tolerance aims at making distributed systems more reliable by
enabling them to continue the provision of services in the presence of
faults. The strongest form is masking fault tolerance, where a system
continues to operate after faults without any observable impairment of
functionality, i.e.\ safety is always guaranteed. In contrast
non-masking fault tolerance does not ensure safety at all times. Users
may experience a certain amount of incorrect system behavior, but
eventually the system will fully recover. The potential of this
concept lies in the fact that it can be used in cases where masking
fault tolerance is too costly or even impossible to implement
\cite{Gaertner:1999}. Systems that eventually recover from transient
faults of any scale such as perturbations of the state in memory or
communication message corruption are called {\em self-stabilizing}. A
critical issue is the length of the time span until full recovery.
Examples are known where a memory corruption at a single process
caused a vast disruption in large parts of the system and triggered a
cascade of corrections to reestablish safety. Thus, an important issue
is the reduction of the effect of transient faults in terms of time
and space until a safe state is reached.

A {\em fault-containing} system has the ability to contain the effects
of transient faults in space and time. The goal is keep the extend of
disruption during recovery proportional to the extent of the faults.
An extreme case of fault-containment with respect to space is given
when the effect of faults is bounded exactly to the set of faulty
nodes. Azar et al.\ call this form {\em error confinement}
\cite{Azar:2010}. More relaxed forms of fault-containment are known as
time-adaptive self-stabilization \cite{Kutten:2004}, scalable
self-stabilization \cite{Ghosh:2002}, strict stabilization
\cite{Nesterenko_02}, strong stabilization \cite{Dubois_12}, and
1-adaptive self-stabilization \cite{Beauquier_06}.

A large body of research focuses on fault-containing in the $1$-faulty
case. A configuration is called $k$-faulty, if in a legitimate
configuration exactly $k$ processes are hit by a fault. Several
metrics have been introduced to quantify the containment behavior in
the $1$-faulty case \cite{Ghosh:2007,Sven:2011}. A distributed
algorithm ${\cal A}$ has {\em contamination radius} $r$ if only nodes
within the $r$-hop neighborhood of the faulty node change their state
during recovery from a 1-faulty configuration. The {\em containment}
time of ${\cal A}$ denotes the worst-case number of rounds any
execution of ${\cal A}$ starting at a 1-faulty configuration needs to
reach a legitimate configuration. In technical terms this corresponds
to the {\em worst case time to recover} in case of a single fault. For
randomized algorithms the expected number of rounds to reach a
legitimate configuration corresponds to the {\em mean time to recover} (MTT).

The stabilization time is an obvious upper bound for the containment
time. In some cases this bound can be improved, for example when the
contamination radius is known. In the shared memory model an algorithm
with contamination radius $r$ and stabilization time $O(f(n))$
obviously has containment time $O(f(\Delta^r))$. There are only few
cases where the containment time is explicitly computed and in these
cases only asymptotic bounds are given. From a practical point of view
absolute bounds are more valuable.


The focus of this paper is on the analysis of the containment time of
randomized self-stabilizing algorithms in the message passing model
with respect to memory and message corruption. We show how Markov
chains can be used to find upper bounds for the containment time that
are lower than the above mentioned trivial bound $O(f(\Delta^r))$.
For a $\Delta+1$-coloring algorithm we analytically derive an
absolute bound for the expected containment time and show that the
variance is surprisingly bounded by a small constant independent of
the network's size. We believe that the presented techniques can also
be applied to other algorithms.

\section{Related Work}
There exist several techniques to analyze self-stabilizing algorithms:
potential functions, convergence stairs, Markov chains, etc. Markov
chains are particular useful for randomized algorithms
\cite{Duflot_01}. Their main drawback is that in order to set up the
transition matrix the adjacency matrix of the graph must be known.
This restricts the applicability of this method to {\em small} or
highly {\em symmetric} instances. DeVille and Mitra apply model
checking tools to Markov chains for cases of networks of small size
($n\le 7$) to determine the expected stabilization time
\cite{DeVille_09}. An example for highly symmetric networks are ring
topologies, see for example \cite{Fribourg_06,Yamashita_11}. Fribourg
et al. model randomized distributed algorithms as Markov chains using
the technique of coupling to compute upper bounds for the
stabilization times \cite{Fribourg_06}. Yamashita uses Markov chains
to model self-stabilizing probabilistic algorithms and to prove
stabilization \cite{Yamashita_11}. Mitton et al.\ consider a
randomized self-stabilizing $\Delta+1$-coloring algorithm and model
this algorithm in terms of urns/balls using a Markov chain to get a
bound for the stabilization time \cite{Mitton_06}. They evaluated the
Markov chain for networks up to 1000 nodes analytically and by
simulations. Crouzen et al.\ model faulty distributed algorithms as
Markov decision processes to incorporate the effects of random faults
when using a non-deterministic scheduler \cite{Crouzen_11}. They used
the PRISM model-checker to compute long-run average availabilities.
The above literature considered only the shared memory model.






\section{Bounding the Containment Time}
\label{sec:modell}
The containment time is a special case of the stabilization time. The
difference is that only executions starting from $1$-faulty
configurations are considered. Such configurations arise when a single
node $v$ is hit by a memory corruption or a single message sent by $v$
is corrupted. We do not consider corruptions of the code of an
algorithm. Denote by $R_v$ the subgraph of the communication
graph $G$ that is induced by the nodes that are engaged in the
recovery process from a $1$-faulty configuration triggered by a fault
at $v$. There are two situations in which it is apparently feasible to
obtain bounds for the containment time that are lower than the above
mentioned trivial bound of $O(f(|R_v|)$: Either the structure of $R_v$
is considerably simpler than that of $G$ or $R_v$'s size is smaller
than that of $G$.

\subsection{Shared Memory Model}
\label{sec:ssm}

First consider the shared memory model. If an algorithm has
contamination radius $r$ and no other fault occurs then this fault
will not spread beyond the $r$-hop neighborhood of the faulty node
$v$. In this case $R_v \subseteq G_v^r$, where $G_v^r$ is the subgraph
induced by $v$ and nodes $w$ with $dist(v,w)\le r$. The analysis of
the containment time is often simplified due to the fact that the
initial configuration is almost legitimate (i.e., only $v$ is not
legitimate).

As a first example consider the well known self-stabilizing Algorithm
$\A_{1}$ to compute a maximal independent set (MIS). 

\vspace*{1mm}
\begin{algorithm}[H]\footnotesize
  \If{$state = IN \wedge \exists w\in N(v)$ s.t. $w.state = IN$}{$state
  := OUT$}
  \If{$state = OUT \wedge \forall w\in N(v)$ $w.state = OUT$}{
      \If{random bit from {0,1} = 1}{$state
  := IN$}}
  \caption{Self-stabilizing algorithm $\A_{1}$ to compute a MIS.\label{alg:MIS}}
\end{algorithm}
\vspace*{1mm}


\begin{lemma}\label{lem:A1}
  Algorithm $\A_1$ has contamination radius two.
\end{lemma}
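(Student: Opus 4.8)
The plan is to fix an arbitrary legitimate configuration $L$ (i.e.\ a maximal independent set) and to observe that a single fault at $v$ can only flip $v$'s state, so there are exactly two cases: $v$ passes from IN to OUT, or from OUT to IN. In both cases I would prove the stronger statement that \emph{no node $u$ with $dist(v,u)\ge 3$ is ever enabled} during recovery, which yields contamination radius at most two; tightness is then witnessed by a small example (e.g.\ a short path) in which a node at distance exactly two is forced to move, so the radius is indeed two. The whole argument rests on two elementary facts about the guarded commands of $\A_1$: a node may execute the first rule (turning an IN node OUT) only while it currently has an IN neighbour, and it may execute the second rule (turning an OUT node IN) only while \emph{all} of its neighbours are currently OUT.

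I would argue by induction over the steps of an arbitrary execution, for any (even distributed, concurrent) daemon, maintaining the invariant that \emph{(i)} every node at distance $\ge 3$ from $v$ still holds its $L$-state and is not enabled, and \emph{(ii)} every node that is IN in $L$ and lies at distance $\ge 2$ from $v$ is still IN. The crux is \emph{(ii)}. Such a node $y$ could leave the IN set only through the first rule, which requires an IN neighbour of $y$; but no neighbour of $y$ is IN in $L$ (that would violate independence in $L$), and a neighbour can become IN only via the second rule, whose guard—all neighbours OUT—is blocked precisely because $y$ is IN, which by the inductive hypothesis has held at every earlier step as well. Hence $y$ never acquires an IN neighbour and stays IN, so outside the immediate neighbourhood of $v$ the IN set is only ever enlarged.

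Using \emph{(ii)} I would then reestablish \emph{(i)} after each step. A node $z$ at distance three that is OUT in $L$ retains at least one IN neighbour throughout: its $L$-witness lies at distance $2$, $3$, or $4$ and is IN in $L$, hence by \emph{(ii)} (at distance two) or by being untouched (at distance $\ge 3$) it stays IN, so $z$ never satisfies the guard of the second rule. A node $z$ at distance three that is IN in $L$ never gains an IN neighbour, since the only neighbours that could turn IN lie at distance two and the second rule can fire there only when all their neighbours—including $z$—are OUT, contradicting $z$ being IN. Thus $z$ is never enabled by either rule, and \emph{(i)} is preserved. The step I would be most careful about, and the main obstacle, is exactly invariant \emph{(ii)} under the interaction of the two rules with an arbitrary scheduler: one must verify that a distance-two node flips to IN only with an entirely OUT neighbourhood—so it can never trigger the first rule at a distance-three node—and that the possible oscillation of such distance-two nodes never strips a distance-three node of its last IN neighbour. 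Once \emph{(ii)} is in place these checks are immediate, and both cases of the initial flip are subsumed by the same invariant.
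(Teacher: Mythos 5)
Your proof is correct, and it rests on the same two elementary observations as the paper's proof --- an IN node can only be deposed by an IN neighbour, and a node can turn IN only when its \emph{entire} neighbourhood is OUT, so a standing IN node at distance $\ge 2$ from $v$ blocks all of its neighbours from ever turning IN --- but you package them differently. The paper argues by a two-case forward propagation: for an IN$\to$OUT fault it shows only $N(v)$ can move (radius one in that case, a sharper per-case fact you do not extract), and for an OUT$\to$IN fault it lets $v$ and its IN neighbours flip and then reuses the first case's reasoning, leaving the key circularity (why the distance-$2$ witnesses ``with a neighbor with state IN will not change their state'') informal. You instead prove a single step-indexed invariant by induction over an arbitrary execution --- (ii) IN-in-$L$ nodes at distance $\ge 2$ never fall, hence (i) distance-$\ge 3$ nodes are never enabled --- which handles both fault cases uniformly, explicitly closes the circular dependency the paper glosses over, works under any daemon rather than only the synchronous scheduler, and adds a tightness witness (e.g.\ the path $v$--$u$--$y$ with $u\in L$, where after the fault $v$ and $u$ flip OUT and $y$ may then turn IN), which the paper asserts only implicitly. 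What the paper's route buys is brevity and the finer observation that the contamination radius actually depends on the direction of the flip; what yours buys is rigor and generality. One cosmetic point: state invariant (i) for all nodes at distance $\ge 3$ (not just exactly three), since the witness of a far OUT node may itself lie at distance $\ge 4$; your argument already covers this as written.
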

\begin{proof}
  Let $v$ be a node hit by a memory corruption. First suppose the
  state of $v$ changes from $IN$ to $OUT$. Let $u\in N(v)$ then
  $u.state=OUT$. If $u$ has an neighbor $w\not=v$ with $w.state=IN$
  then $u$ will not change its state during recovery. Otherwise, if
  all neighbors of $u$ except $v$ had state $OUT$ node $u$ may change
  state during recovery. But since these neighbors of $u$ have a
  neighbor with state $IN$ they will not change their state. Thus, in
  this case only the neighbors of $v$ may change state during
  recovery. 

  Next suppose that $v.state$ changes from $OUT$ to $IN$. Then $v$
  and those neigh\-bors of $v$ with state $IN$ can change to
  $OUT$. Then arguing as in the first case only nodes within distance
  two of $v$ may change their state during recovery.
\end{proof}

Graph $R_v$ can contain
any subgraph $H$ with $\Delta(G)$ nodes. For example let $G$ consist of
$H$ and an additional vertex $v$ connected to each node of $H$. A
legitimate configuration is given if the state of $v$ is $IN$ and all
other nodes have state $OUT$ (Fig.~\ref{fig:A1_cr} left). If $v$
changes its state to $OUT$ due to a fault then all nodes may change to
state $IN$ during the next round. A precise analysis of the
containment time depends extremely on the structure of $H$. Thus,
there is little hope for a bound below the trivial bound. Similar
arguments hold for the second 1-faulty configuration of $\A_{1}$ shown
on the right of Fig.~\ref{fig:A1_cr}.

\begin{figure}
\centering
\subfigure{\includegraphics[scale=0.875]{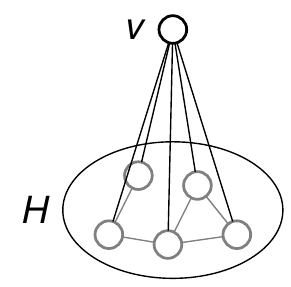}}\hspace*{12mm}
\subfigure{\includegraphics[scale=0.875]{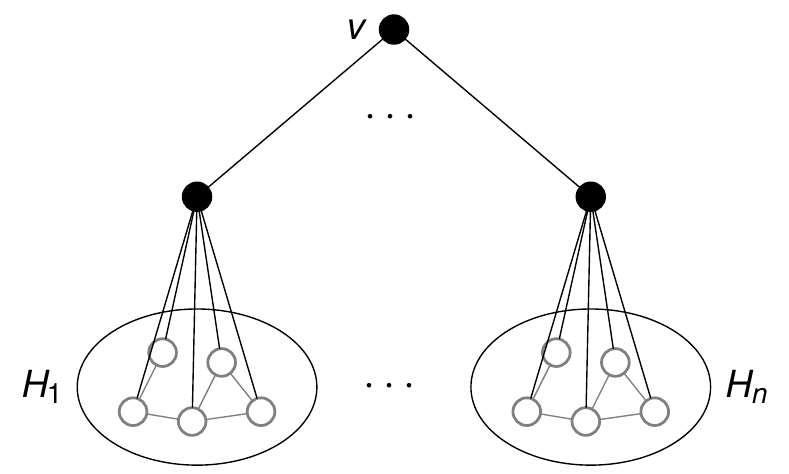}}
\caption{1-faulty configurations of $\A_{1}$ caused by a memory
  corruption at $v$. Nodes drawn in bold have state $IN$. The depicted
  graphs correspond to $R_v$.}\label{fig:A1_cr}
\end{figure}

Next we consider the problem of finding a $\Delta + 1$-coloring.
Almost all self-stabilizing algorithms for this problem follow the
same pattern. A node that realizes that it has chosen the same color
as one of its neighbors chooses a new color from a finite color
palette. This palette does not include the current colors of the
node's neighbors. To be executed under the synchronous scheduler these
algorithms are either randomized or use identifiers for symmetry
breaking. Variations of this idea are followed in
\cite{Dolev_97,Gradinariu:2000,Nesterenko_02,Mitton_06}. As an example
consider Algorithm $\A_{2}$ from \cite{Gradinariu:2000}. Due to its
choice of a new color from the palette $\A_{2}$ has contamination
radius at least $\Delta(G)$ (see Fig.~\ref{fig:A2_cr}).

\vspace*{2mm}
\begin{algorithm}[H]\footnotesize
    \If{$c \not= \max\left(\{0,\ldots,\Delta\}\backslash
        \{w.c\mid w\in N(v)\} \right)$}{
      \If{random bit from {0,1} = 1}{$c := \max\left(\{0,\ldots,\Delta\}\backslash
        \{w.c\mid w\in N(v)\} \right)$}
  }
  \caption{Self-stabilizing $\Delta + 1$-coloring algorithm $\A_{2}$ from
    \cite{Gradinariu:2000}.\label{alg:grad_color}}
\end{algorithm}
\vspace*{1mm}

\begin{figure}\label{fig:A2_cr}
\centering
\includegraphics[width=1\textwidth]{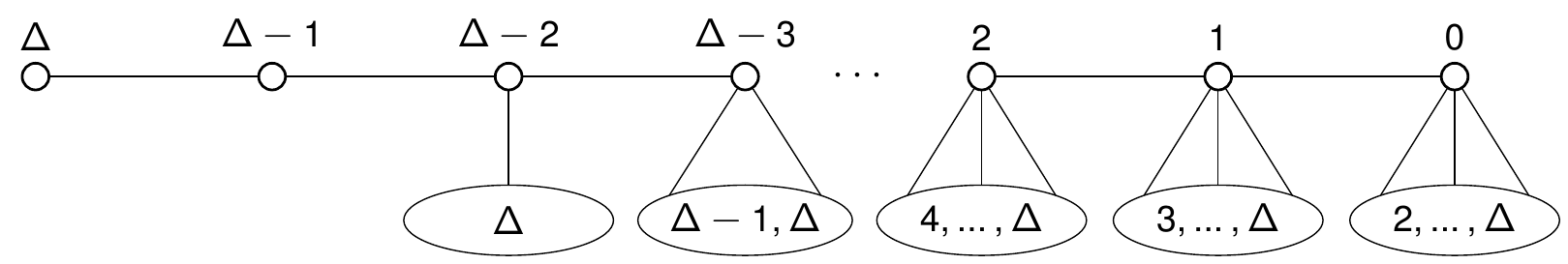}
\caption{Algorithm $\A_2$ has contamination radius $\Delta$: If the
  left-most node is hit by a fault and changes its color to
  $\Delta-1$, then all other nodes may change their color.}
\end{figure}

For some problems minor modifications of the algorithms can lead to
dramatic changes of the contamination radius. Algorithm $\A_{3}$ is a
slight modification of this algorithm. $\A_{3}$ has containment radius
$1$ (see Lemma~\ref{lem:A3}) and $R_v$ is a star graph with center
$v$. Note that neighbors of $v$ that change their color during
recovery form an independent set. This simple structure allows an
analysis of the containment time.

\vspace*{1mm}
\begin{algorithm}[H]\footnotesize
    \If{$\exists w\in N(v)$ s.t. $c = w.c$}{
      \If{random bit from {0,1} = 1}{$c :=$ choose $\{0,\ldots,\Delta\}\backslash\{w.c\mid w\in N(v)\}$}}
  \caption{Self-stabilizing $\Delta + 1$-coloring algorithm $\A_{3}$.}\label{alg:simp_color}
\end{algorithm}
\vspace*{1mm}

\begin{lemma}\label{lem:A3}
  Algorithm $\A_3$ has contamination radius one.
\end{lemma}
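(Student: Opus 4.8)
The plan is to show that after a single memory corruption at node $v$, only $v$ itself may change its color during recovery, so that the recovery subgraph $R_v$ is the star with center $v$ and the contamination radius is one. I would start from a legitimate (properly colored) configuration and let $v$ be the single node hit by the fault. The key structural observation is that Algorithm $\A_3$ is governed by a purely \emph{local conflict} guard: a node $w$ is enabled (i.e.\ can change its color) only when it currently shares its color with some neighbor, i.e.\ $\exists u\in N(w)$ with $w.c = u.c$.

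First I would argue that before the fault every node is legitimate, so no two adjacent nodes share a color. After $v$'s color is corrupted, the only conflicts that can possibly arise are between $v$ and its neighbors: any edge not incident to $v$ still joins two nodes whose colors are unchanged and hence distinct. Therefore the only enabled nodes in the faulty configuration are $v$ itself and those neighbors $w\in N(v)$ for which the corruption created a conflict $v.c = w.c$. This already confines all \emph{initial} activity to $v$ and its direct neighbors.

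The main step, and the place I expect the real work to be, is to show that a neighbor $w$ of $v$ never actually needs to change its color, so that conflicts are resolved by $v$ alone and the activity does not propagate to distance two. When $v$ recolors, it chooses from $\{0,\ldots,\Delta\}\setminus\{u.c\mid u\in N(v)\}$, which by the pigeonhole bound (a node has at most $\Delta$ neighbors, and there are $\Delta+1$ colors) is non-empty; the new color differs from every neighbor's color, so after $v$ moves it is in conflict with nobody. I would then observe that a neighbor $w$ that was in conflict with $v$ is only ever in conflict with $v$ (its other neighbors are still properly colored relative to $w$), so once $v$ has recolored successfully, $w$ becomes legitimate \emph{without having to act}. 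The one subtlety is the synchronous/randomized nature of the algorithm: in a round $w$ might also be enabled and might flip its random bit and recolor simultaneously with $v$. Here I would note that even if $w$ does recolor, it picks a color avoiding all of its own neighbors' current colors, so it cannot create a new conflict with any node at distance two from $v$; hence no node outside $N[v]$ is ever enabled, and the contamination radius is exactly one.

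To finish, I would combine these observations: the set of nodes that are ever enabled during recovery is contained in $\{v\}\cup N(v)$, so $R_v$ is a subgraph of the star centered at $v$, giving contamination radius one. I would remark that the previously stated figure (Fig.~\ref{fig:A2_cr}) showing radius $\Delta$ for $\A_2$ highlights precisely the difference: $\A_2$'s $\max$-based palette choice can reintroduce a conflict with a further neighbor, whereas $\A_3$'s guard fires only on an existing local conflict and its free-color choice guarantees no new conflict is propagated.
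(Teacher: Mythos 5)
Your proposal is correct and follows essentially the same route as the paper's own proof: after the fault, conflicts lie only on edges between $v$ and the neighbors sharing the corrupted color, and since a recoloring node picks a color avoiding all of its neighbors' current colors, new conflicts can arise only between simultaneously moving nodes inside $N[v]$, so the fault never spreads to distance two. One small slip to fix: your opening sentence claims that only $v$ itself may change its color, which contradicts (and is corrected by) the rest of your argument --- the conflicted neighbors in $N_{conf}=\{w\in N(v)\mid w.c=c\}$ are enabled and do recolor with probability $1/2$, and the paper, like your later paragraphs, confines the activity to $N_{conf}\cup\{v\}$ rather than to $v$ alone.
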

\begin{proof}
  Let $v$ be a node hit by a memory corruption changing its color to a
  color $c$ already chosen by at least one neighbor of $v$. Let
  $N_{conf}=\{w\in N(v)\mid w.c = c\}$. In the next round the nodes in
  $N_{conf} \cup \{v\}$ will get a chance to choose a new color. The
  choices will only lead to conflicts between $v$ and others nodes in
  $N_{conf}$. Thus, the fault will not spread beyond the set
  $N_{conf}$. With a positive probability the set $N_{conf}$ will
  contain fewer nodes in each round.
\end{proof}

\subsection{Message Passing Model}
\label{sec:mpm}
In the message passing model the situation is different for two
reasons. First of all, a $1$-faulty configuration is also given when a
single message sent by a node $v$ is corrupted. Secondly, this may
cause neighbors of $v$ to send messages they would not send in a
legitimate configuration. Even so the state of nodes with distance
greater than $r$ to $v$ does not change, these nodes may be forced to
send messages. Thus, in general the analysis of the containment time
cannot be performed by considering $G_v^r$ only. This is only possible
in cases when a fault at $v$ does not force nodes at distance $r+1$ to
send messages they would not send had the fault not occurred.

\section{Computing the Expected Containment Time}
\label{sec:mc}
A randomized synchronous self-stabilizing algorithm $\AG$ can be
regarded as a transition systems. Denote by $\Sigma$ the set of all
configurations. In each round the current configuration $c\in \Sigma$
is transformed into a new configuration $\AG(c)\in \Sigma$. This
process is described by the transition matrix $P$ where $p_{ij}$ gives
the probability to move from configuration $c_i$ to $c_j$ in one
round, i.e., $\AG(c_i)=c_j$.

To compute the containment time one must consider all executions
starting from a $1$-faulty configuration $c$. Let $X$ be the random
variable that denotes the number of rounds until the system has
reached a legitimate configuration when starting in $c$. The expected
containment time equals the expected value $E[X]$. In some
cases it is possible to compute $E[X]$ directly according to the
definition. But in most cases this will be impossible.

To reduce the complexity it is often helpful to partition $\Sigma$
into subsets $S_0,\ldots, S_l$ and consider these as the states of a
Markov chain. The subsets must have the property that for each pair of
subsets $S_i,S_j$ the probability of a configuration $c\in S_i$ to be
transformed in one round into a configuration of $S_j$ is independent
of the choice of $c \in S_i$. This probability is then interpreted as
the transition probability from $S_i$ to $S_j$. This way the
complexity of the analysis can often be reduced dramatically.

A state $c_i$ of a Markov chain is called {\em absorbing} if
$p_{ii}=1$ and $p_{ij}=0$ for $i\not= j$. For self-stabilizing
algorithms, the set of all legitimate configurations $\cal L$ is an
absorbing state, in fact it is the unique absorbing state in this
case. The number of rounds to reach a legitimate configuration starting
from a given configuration $c_i\in S_i$ equals the number of steps
before being absorbed in $\cal L$ when starting in state $S_i$. This
equivalence allows to use techniques from Markow chains to compute the
stabilization time and thus, the containment time.

To compute the containment time we must consider all executions
starting from a fixed $1$-faulty configuration. Let $S_0$ consist of a
single $1$-faulty configuration and $S_l$ be the set of all legitimate
configurations. It suffices to compute the expected number of rounds
to reach $S_l$ from $S_0$ and then take the maximum for all $1$-faulty
configurations.

\subsubsection{Example}
\label{sec:example1}
As an example consider algorithm $\A_{3}$ as described above. Let
$c$ be a legitimate configuration and $v$ be a node that changes its
color due to a memory fault. This causes a conflict with all $d$
neighbors of $v$ that had chosen this color. During the execution of
$\A_{3}$ only nodes contained in $R_v$ (a star graph) change their
state. Furthermore, once a neighbor has chosen a color different from
$v$ then the choice will be forever (at least until the next transient
fault).

Let $d$ be the number of neighbors of $v$ that have the same color as
$v$ after the fault. Denote by $S_j$ the set of all configurations
reachable from $c$ where exactly $d-j$ neighbors of $v$ have the same
color as $v$. Then $S_0=\{c\}$ and $S_d$ consists of all legitimate
configurations. Let $c_i\in S_i$. Then $\A_{3}(c_i)\not\in S_j$
for all $j < i$. Unfortunately, the probability of a configuration
$c_i\in S_i$ to be transformed in one round into a configuration of
$S_j$ for $j>i$ is not independent of the choice of $c_i \in S_i$. But
it is possible to resolve this issue as shown below.

\subsection{Absorbing Markov Chains}
\label{sec:absorb-mark-chains}
Let $S_0,\ldots,S_{d}$ be nonempty subsets of $\Sigma$ and $P$ a
stochastic matrix such that $P(\AG(c)\in S_j)=p_{ij}$ for all $i,j\in
\{0,1,\ldots,d\}$ and all $c\in S_i$. Furthermore, let $S_{d}$ be
the single absorbing state. Let $Q$ be the matrix obtained from $P$ by
removing the last row and the last column. $Q$ describes the
probability of transitioning from some transient state to another. The
following properties about absorbing Markov chains are well known and
can be found in Theorem 3.3.5 of \cite{Kemmeny:1976}.
Denote the $d \times d$ identity matrix by $E_d$. Then \[N = (E_d -
Q)^{-1}\] is the fundamental matrix of the Markov chain. The expected
number of steps before being absorbed by $S_{d}$ when starting from
 $S_i$ is the $i$-th entry of vector \[a = NI_d\] where $I_d$ is
a length-$d$ column vector whose entries are all 1. The variance of
these numbers of steps is given by the entries of \[(2N-E_d)a -
a_{sq}\] where $a_{sq}$ is derived from $a$ by squaring each entry.

\subsubsection{Example}
\label{sec:example2}
To apply the results of the last section to algorithm $\A_{3}$ the
following adjustments are made. For $i< j$ let $p_{ij}$ be a constant
such that $P(\A_{3}(c_i)\in S_j) \ge p_{ij}$ for all $c_i\in S_i$.
Furthermore, let $p_{ij} =0$ for $j < i$ and \[p_{ii} = 1-
\sum_{k=j+1}^d p_{ij}\] for $i=0,\ldots, d$. Then matrix $(p_{ij})$ is
a stochastic matrix with $p_{dd}=1$. Furthermore, the expected number
of steps before being absorbed by state $d$ when starting from state
$i$ is an upper bound for the expected number of rounds before being
absorbed by state $S_{d}$ when starting from state $S_i$. Thus, the
results from the last section can be used to find an upper bound for
the expected containment time of algorithm $\A_{3}$.


These techniques are applied to the self-stabilizing coloring
algorithm $\A_{col}$.

\section{Algorithm $\A_{col}$}
Computing a $\Delta + 1$-coloring in expected $O(\log n)$ rounds with
a randomized algorithm is long known \cite{Luby:1986,Johansson:1999}.
The $(\Delta +1)$-coloring algorithm $\A_{col}$ analyzed in this work
follows the pattern sketched in section~\ref{sec:ssm}. It is 
derived from a non-self-stabilizing algorithm contained in
\cite{Barenboim:2013} (Algorithm 19). The presented techniques can
also be applied to other randomized coloring algorithms such as those
described in \cite{Dolev_97,Gradinariu:2000,Nesterenko_02,Mitton_06}.
The main difference is that $\A_{col}$ assumes the message passing
model, more precisely the synchronous ${\cal CONGEST}$ model as
defined in \cite{Peleg:2000}. $\A_{col}$ stabilizes after
$O(\log n)$ rounds with high probability whereas the above cited
algorithms require a linear number of rounds.

At the beginning of each round of $\A_{col}$ each node broadcasts its
current color to its neighbors. Based on the information received from
its neighbors a node decides either to keep its color (final choice),
to choose a new color or no color (value $\bot$). In particular with
equal probability a node $v$ draws uniformly at random a color from
the set $\{0, 1, \dots,\delta(v)\}\backslash tabu$ or indicates that
it made no choice (see Function {\tt randomColor}). Here, $tabu$ is
the set of colors of neighbors of $v$ that already made their final
choice.

\vspace*{3mm}
\begin{algorithm}[H]\footnotesize
  Color \randomColor{Node v, Set<Color> tabu}\block{
    \lIf{random bit from {0,1} = 1}{
      \Return{$\bot$}
    }
    \Return random color from $\{$0,1,$\ldots, \delta(v)\}\backslash$tabu\;
  }\vspace*{1mm}
\end{algorithm}
\vspace*{1mm}

In the algorithm of \cite{Barenboim:2013} a node maintains a list with
the colors of those neighbors that made their final choice. A fault
changing the content of this list is difficult to contain.
Furthermore, in order to notice a memory corruption at a neighbor,
each node must continuously send its state to all its neighbors and
cannot stop to do so. This is the price of self-stabilization and well
known \cite{Dolev:2000}. These considerations lead to the design of
Algorithm $\A_{col}$. Each node only maintains the chosen color
(variable {\tt c}) and whether its choice is final (variable {\tt
  final}). In every round a node sends these two variables to all
neighbors. To improve fault containment a node's final choice of a
color is only withdrawn if it coincides with the final choice of a
neighbor. To achieve a $\Delta +1$-coloring a node makes a new
choice if its color is larger than its degree. Note that this
situation can only originate from a fault.

\vspace*{2mm}
\begin{algorithm}[H]\footnotesize
  Set<Color> tabu := $\emptyset$,  occupied := $\emptyset$\;
  broadcast(c, final) to all neighbors w $\in$ N(v)\;
  \For{all neighbors w $\in$ N(v)}{
     receive(c$_w$, final$_w$) from node $w$\;
     \If{$c_w\not=\bot$}{
        occupied := occupied $\cup~\{$c$_w\}$\;
        \lIf{final$_w$}{
           tabu := tabu $\cup~\{$c$_w\}$
        }
     }
  }
  \eIf{c = $\bot \vee c > \delta(v)$}{
      final := false\;
  }{
    \eIf{final}{
      \lIf{$c \in$ tabu}{
        final := false
      }
    }{
      \lIf{$c \not\in$ occupied}{
        final := true
      }
    }
  }
  \lIf{final = false}{
    c := \randomColor{v, tabu}
  }\vspace*{3mm}
  \caption{$\A_{col}$ as executed by a node $v$.}\label{alg:col}
\end{algorithm}
\vspace*{2mm}

Theorem~\ref{theo:stab_time} states the correctness and the stabilization
time of \AC. The algorithm works correctly for any initial setting
of the variables. Note that if $v.final=true$ one round after a
transient fault or the initial start $v$ will not change its color
until the next fault. With this observation the theorem can be proved
along the same lines as Lemma~10.3 in \cite{Barenboim:2013}.

\begin{theorem}\label{theo:stab_time}
  Algorithm \AC is self-stabilizing and computes a $\Delta + 1$-coloring
  within $O(\log n)$ time with high probability (i.e. with probability
  at least $1-n^c$ for any $c \ge 1$). \AC has contamination radius $1$.
\end{theorem}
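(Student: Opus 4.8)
The plan is to establish the three assertions separately: closure of the set of legitimate configurations, convergence to it within \Ole{\log n} rounds with high probability, and containment of a single fault to the one-hop neighborhood. I would start with closure, which falls straight out of the code. In a legitimate configuration every node $v$ has $final=true$, a color $c\le\delta(v)$, and a color distinct from all its neighbors; hence the guard $c=\bot\vee c>\delta(v)$ fails, every final node finds $c\notin tabu$ because adjacent final nodes carry distinct colors, and so no node ever clears its flag or redraws. Combined with the monotonicity remark preceding the theorem --- a node that is final at the end of a round retains its color until the next fault --- this shows that once the final nodes form a proper partial coloring the set of final nodes only grows, which is the backbone of both the convergence and the containment arguments.

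For convergence I would reduce to the analysis of Algorithm~19 in \cite{Barenboim:2013}. First I would observe that a single round turns an arbitrary configuration into one whose final nodes already form a proper partial coloring with colors in range: any node with $c=\bot$ or $c>\delta(v)$ is forced non-final, and two adjacent final nodes sharing a color both see $c\in tabu$ and reset. From that round onward the execution is precisely the repeated-trial coloring process of Barenboim and Elkin, so it suffices to prove that in every round each non-final node commits to a permanent color with probability at least a positive constant. A union bound over the $n$ nodes and over \Ole{\log n} rounds then gives the high-probability guarantee stated in the theorem, exactly along the lines of Lemma~10.3 in \cite{Barenboim:2013}.

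The technical core, and the step I expect to be the main obstacle, is this uniform constant lower bound on the per-round commit probability, since a non-final node $v$ succeeds only when its random color avoids the colors simultaneously drawn by its non-final neighbors. I would condition on $v$ drawing a genuine color (probability $1/2$) uniformly from its palette $\{0,\dots,\delta(v)\}\setminus tabu$, whose size $s_v$ satisfies $s_v\ge u_v+1$, where $u_v$ is the number of non-final neighbors, because the final neighbors consume distinct tabu colors. Averaging over the drawn color $x$ and bounding $\prod_w(1-q_w(x))\ge 1-\sum_w q_w(x)$, with $q_w(x)\le\tfrac{1}{2s_w}$ the probability that neighbor $w$ also draws $x$, the double sum $\sum_x\sum_w q_w(x)$ collapses to at most $u_v/2$; this yields a commit probability of at least $\tfrac12\bigl(1-\tfrac{u_v}{2s_v}\bigr)\ge\tfrac14$. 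The point is that this constant is independent of the degrees and of the current partial coloring, so the geometric decay driving the \Ole{\log n} bound survives even the adversarial partial colorings that a fault may leave behind.

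Finally, for the contamination radius I would argue as in Lemma~\ref{lem:A3}. A single fault corrupts either the state of a node $v$ or one message it emits; its neighbors may react, for instance a neighbor $w$ whose color now matches that of the still-final $v$ detects $c\in tabu$ and clears its flag. No node $z$ at distance two, however, is disturbed: $z$ is final, its own final neighbors are untouched by the fault, and the single changed neighbor $w$ now advertises $final=false$, so $w$'s color enters $z$'s $occupied$ set but never its $tabu$ set and cannot force $z$ to reset. Thus only $v$ and its direct neighbors ever change state, establishing contamination radius $1$ and the star-shaped $R_v$ that the later Markov-chain analysis relies on.
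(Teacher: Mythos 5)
Your proposal is correct and follows essentially the same route as the paper's own proof: your closure/monotonicity observations are the content of Lemmas~\ref{lem:silent}--\ref{lem:vlaid_color}, your per-round commit probability of at least $1/4$ followed by a union bound over $n$ nodes and $(c+1)\,4\log n$ rounds is precisely the paper's argument (modeled, as you note, on Lemma~10.3 of \cite{Barenboim:2013}), and your distance-two argument via the $\mathit{final}=\mathit{false}$ advertisements is the argument of Lemma~\ref{lem:contain}. The only deviation is cosmetic: you average the collision probability over $v$'s palette using the neighbors' palette sizes $s_w$, a slightly more careful bookkeeping than the paper's per-neighbor bound of $\frac{1}{2(\delta(v)+1-|v.tabu|)}$, but it yields the same constant.
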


First we consider {\em error-free} executions, i.e.\ executions during
which no memory nor message corruptions occurs. Note that \AC must
work correctly for any initial setting of the variables. A
configuration is called a {\em legal coloring} if the values of
variable $c$ form a $\Delta +1$-coloring of the graph. It is called
{\em legitimate} if is a legal coloring and $v.final=true$ for each
node $v$. A node $v$ {\em pauses} in round $r$ if it does not change
the value of $v.c$ or $v.final$ in round $r$. A node $v$ {\em
  terminates} in round $r$ if it pauses in round $r$ and all following
rounds.

\begin{lemma}\label{lem:silent}
  Let $e$ be an error-free execution and let $v\in V$. If $v$ pauses
  in round $r$ of $e$ then $v.final=true$ and $v$ terminates in round
  $r$.
\end{lemma}
\begin{proof}
  The only constellation at the beginning of a round in which $v$
  pauses is $v.final= true$, $c\in\{0,1,\ldots,\delta(v)\}$, and $c
  \not\in v.tabu$. The latter condition implies that each neighbor
  $w$ of $v$ with $w.c=v.c$ at the beginning of round $r$ has sent
  $w.final = false$ to $v$. Since $v$ sent $(v.c,true)$ to each
  neighbor, no node $w\in N(v)$ will at the end of round $r$ have
  $w.c=v.c$ and $w.final=true$. This implies, that in the following
  round still $v.c\not\in v.tabu$ holds. Thus, $v$ also pauses in
  round $r+1$. This proves that $v$ terminates in round $r$.
\end{proof}

\begin{lemma}\label{lem:init}
  Let $r$ be a round of an error-free execution and let $v\in V$. If
  $v.final= true$ at the beginning and $v.final=false$ at the end of
  round $r$ then $r=1$.
\end{lemma}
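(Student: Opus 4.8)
The plan is to prove the contrapositive: for every round $r\ge 2$, if $v.final=true$ at the beginning of round $r$ then $v.final=true$ also at its end. Inspecting \AC, the value of $v.final$ can be set to $false$ in only two places: the outer branch guarded by $c=\bot\oder c>\delta(v)$, and the inner branch guarded by $c\in tabu$ (which is reached only when $v.final=true$). I would show that, for $r\ge 2$, neither guard can be satisfied, so $v.final$ survives the round unchanged. The whole argument is a single-round analysis that needs no induction; round $1$ is genuinely exceptional only because its incoming configuration is arbitrary.

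First I would rule out the outer branch. If $v.final=true$ at the end of round $r-1$, then during round $r-1$ node $v$ must have taken the else-branch of the outer conditional (the first branch would have set $v.final=false$), so $c\not=\bot$ and $c\le\delta(v)$ held at the start of round $r-1$; moreover the final recoloring assignment to $c$ runs only when $final=false$, so $v.c$ was not modified and still lies in $\{0,\ldots,\delta(v)\}$ at the end of round $r-1$. Hence at the start of round $r$ the guard $c=\bot\oder c>\delta(v)$ is false. This step is exactly where $r-1\ge 1$ is used, which is why round $1$ is excluded: there the initial value of $c$ need not be in range.

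The main work is ruling out the inner branch, i.e.\ proving $v.c\not\in tabu$ in round $r$. Since $tabu$ collects the colors of neighbors that broadcast $final=true$, this reduces to the invariant that for $r\ge 2$ no two adjacent nodes that are both $final$ at the start of round $r$ share a color. I would prove this directly by analyzing the transitions of round $r-1$ for an adjacent pair $v,w$ that are both $final$ at its end, recording that a node ending a round $final$ leaves its color untouched (same observation as above). If at least one of them, say $w$, switched to $final=true$ during round $r-1$, then at that moment $w.c\not\in occupied$, so no neighbor --- in particular $v$ --- had broadcast $w$'s color; since neither color then changes, this forces $v.c\not=w.c$. Otherwise both were already $final$ at the start of round $r-1$; if they shared a color, each would have received the other's color with the flag $final=true$, placed it into its own $tabu$, and dropped $final$ via the inner branch, contradicting that both stay $final$. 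Either way $v.c\not=w.c$ at the start of round $r$, so $v.c\not\in tabu$.

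Combining the two steps, for $r\ge 2$ neither guard that clears $final$ can fire, so $v.final$ remains $true$, contradicting the hypothesis; hence $r=1$. I expect the delicate point to be the case distinction in the invariant --- in particular keeping straight which color values are read at the start versus the end of round $r-1$, and consistently using that a node ending a round with $final=true$ does not change $c$ --- rather than any computation.
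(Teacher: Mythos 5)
Your proof is correct, and its skeleton matches the paper's: both identify the only ways $v.final$ can be cleared (the out-of-range/$\bot$ guard versus a color clash with a $final$ neighbor) and rule each out for $r\ge 2$ by inspecting the transitions of round $r-1$; your contrapositive packaging is cosmetic. You diverge in two substantive places, both to your advantage. First, for the guard $c=\bot \vee c>\delta(v)$ the paper treats the two disjuncts separately and handles $c=\bot$ by a detour through the pausing/termination machinery of Lemma~\ref{lem:silent} (deriving that $v$ paused in round $r-1$ without terminating, a contradiction), whereas your single observation that $final=true$ at the end of round $r-1$ forces the else-branch of the outer conditional --- hence $c\not=\bot$ and $c\le\delta(v)$ at the start of round $r-1$, with $c$ left untouched --- disposes of both disjuncts at once and makes the lemma independent of Lemma~\ref{lem:silent}. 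Second, in the clash case the paper argues that neither $v$ nor $w$ changed color in round $r-1$ and then asserts ``Thus, $v$ sent $(c_r, true)$ in round $r-1$,'' silently skipping the subcase in which an endpoint had $final=false$ at the start of round $r-1$ and switched to $true$ during it; your case (a) covers exactly this via the $occupied$-guard (the switching node needs $w.c\not\in occupied$, yet the other endpoint broadcast the shared color, which is $\not=\bot$ because a node ending the round $final$ took the else-branch), while your case (b) reproduces the paper's mutual-$tabu$ contradiction. The net effect is a self-contained argument that also fills a small gap the published proof glosses over; the only points to make fully explicit in a final write-up are the non-$\bot$ justifications you partly leave implicit in case (a).
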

\begin{proof}
  Denote the value of $v.c$ at the beginning of round $r$ by $c_r$. In
  order for $v$ to set $v.final$ to $false$ one of the following three
  conditions must be met at the beginning of round $r$: 
  \begin{enumerate}
  \item $c_r > \delta(v)$,
  \item $c_r = \bot$, or
  \item $v$ has a neighbor $w$ with $w.final=true$ and $w.c=c_r$.
  \end{enumerate}
  The first condition can only be true in round $1$. Suppose that
  $c_r=\bot$ and $v.final=true$ at the beginning of round $r$ with
  $r>1$. If during round $r-1$ the value of $v.final$ was set to
  $true$ then $v.c$ could not be $\bot$. Hence, at the beginning of
  round $r-1$ already $v.final=true$. But then $v.c$ was not changed
  in round $r-1$, hence $v.c=\bot$ at the beginning of round $r-1$,
  i.e. $v$ paused in round $r-1$ but did not terminate. This is a
  contradiction with Lemma~\ref{lem:silent}. Finally assume the last
  condition. Then $v$ and $w$ cannot have changed their value of
  variable $c$ in round $r-1$, because then variable $final$ could not
  have value $true$ at the beginning of round $r$. Thus, $v$ sent
  $(c_r, true)$ in round $r-1$. Hence, if $w.c=c_r$ at that time $w$
  would have changed $w.final$ to $false$, contradiction.
\end{proof}

\begin{lemma}\label{lem:terminate}
  A node setting $final$ to $true$ in round $r$ terminates in round
  $r+1$.
\end{lemma}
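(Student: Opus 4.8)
The plan is to bootstrap the full termination statement from a single pause, using Lemma~\ref{lem:silent}: that lemma tells us that as soon as $v$ pauses in round $r+1$, it automatically terminates there. So the entire burden is to show that the node $v$ which sets $v.final$ to $true$ in round $r$ leaves both $v.c$ and $v.final$ untouched in round $r+1$. First I would pin down how $v$ can have set $v.final:=true$ at all: inspecting \AC, the only assignment $final:=true$ lives in the branch reached when $v.c\not=\bot$, $v.c\le\delta(v)$, $v.final$ was $false$ at the start of round $r$, and $v.c\not\in occupied$. Because $v.final$ is now $true$, the closing guard $final=false$ fails and $v.c$ is not rewritten; write $c_r$ for this surviving color.

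Two facts then come for free. At the start of round $r+1$ we already have $v.final=true$ and $v.c=c_r\in\{0,\ldots,\delta(v)\}$, so by the pause characterization used in the proof of Lemma~\ref{lem:silent} the only thing left to check is that $c_r\not\in v.tabu$ at the start of round $r+1$. Equivalently, I must rule out any neighbor $w$ of $v$ that sends $(c_r,true)$ into round $r+1$. The second free fact feeds directly into this: $c_r\not\in occupied$ in round $r$ means that \emph{no} neighbor of $v$ carried color $c_r$ at the beginning of round $r$.

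This neighbor analysis is the heart of the argument. Fix a neighbor $w$. Since $w.c\not=c_r$ held at the start of round $r$, in order for $w.c=c_r$ to hold at the start of round $r+1$ the node $w$ must have rewritten its color during round $r$. But a node only rewrites $c$ through the concluding call to \texttt{randomColor}, which executes precisely when $final=false$ after the round's update; hence any neighbor that newly acquires $c_r$ ends round $r$ with $w.final=false$ and therefore transmits $final_w=false$ at the start of round $r+1$. Such a $w$ cannot place $c_r$ into $v.tabu$. Consequently no neighbor contributes $c_r$ to the tabu set, so $c_r\not\in v.tabu$ at the start of round $r+1$.

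Collecting everything, at the start of round $r+1$ the node $v$ satisfies $v.final=true$, $v.c=c_r\in\{0,\ldots,\delta(v)\}$, and $v.c\not\in v.tabu$, which is exactly the constellation in which $v$ pauses; Lemma~\ref{lem:silent} then yields that $v$ terminates in round $r+1$. I expect the only genuine obstacle to be the neighbor step: one must exclude a neighbor that simultaneously holds $c_r$ and is final, and the clean way to do this is the structural observation that every color change is funnelled through \texttt{randomColor} and thus forces $final=false$ in the very round the color is taken.
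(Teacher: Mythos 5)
Your proof is correct and follows the paper's overall skeleton --- reduce the claim to showing that $v$ pauses in round $r+1$ and then invoke Lemma~\ref{lem:silent} --- but your handling of the crucial neighbor step is genuinely different, and in fact tighter, than the paper's. The paper argues that since all $w\in N(v)$ satisfy $w.c\neq v.c$ at the beginning of round $r$ and $v$ keeps its color, \emph{no} neighbor will change its color to $v.c$ during round $r$. As literally stated this is hard to justify: $v.final$ was still $false$ at the start of round $r$ (it is set to $true$ only during that round), so $v$ broadcast $(v.c,false)$, hence $v.c$ need not lie in $w.tabu$, and the concluding \texttt{randomColor} call at a neighbor $w$ can perfectly well return $v.c$. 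Your argument sidesteps this gap: you permit a neighbor to acquire $c_r=v.c$ during round $r$, but observe that every color change is funnelled through \texttt{randomColor}, which executes only when $final=false$ at the end of the round; such a neighbor therefore broadcasts $(c_r,false)$ at the start of round $r+1$ and cannot insert $c_r$ into $v.tabu$, since $tabu$ collects only colors tagged with $final=true$. That delivers exactly the constellation the pause characterization in Lemma~\ref{lem:silent} requires ($v.final=true$, $v.c\in\{0,\ldots,\delta(v)\}$, $v.c\notin v.tabu$), and it is the same mechanism the paper itself exploits inside the proof of Lemma~\ref{lem:silent}. In short: same decomposition and same key lemma, but your neighbor analysis repairs a step the paper glosses over, so your version is the one that actually closes the argument.
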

\begin{proof}
  A node $v$ that sets $v.final$ to $true$ satisfies
  $v.c\in\{0,1,\ldots,\delta(v)\}$ and all $w\in N(v)$ have $w.c\not=v.c$
  at the beginning of round $r$. Also $v$ does not change its color
  during round $r$. Thus, no  $w\in N(v)$ will change its
  color to $v.c$ during round $r$. Thus, at the beginning or the next
  round $v.final=true$ and $v.c \not\in v.tabu$. This yields that $v$
  pauses in round $r+1$. The result follows from
  Lemma~\ref{lem:silent}.
\end{proof}

\begin{lemma}\label{lem:vlaid_color}
  If all nodes have terminated  the configuration is legitimate.
\end{lemma}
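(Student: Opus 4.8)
The plan is to reduce everything to a single round in which every node pauses and then to read off the two defining properties of a legitimate configuration directly from the behavior of $\A_{col}$ in that round. First I would establish that such a round exists and that the configuration in question is well defined: each node $v$ terminates in some round $r_v$, and since $V$ is finite I may set $R\coloneqq\max_{v\in V} r_v$. By definition of termination every node pauses in round $R$ and in all later rounds, so the values of the variables $c$ and $final$ are constant from round $R$ onwards. Hence ``the configuration'' reached once all nodes have terminated is well defined and coincides with the state at the beginning of round $R$.

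Next I would extract, for a fixed node $v$, three facts about round $R$. By Lemma~\ref{lem:silent} the fact that $v$ pauses gives $v.final=true$ at the beginning of round $R$, and since $v$ does not change $v.final$ during the round, none of the three branches of $\A_{col}$ that assign $final:=false$ can have fired. Inspecting the algorithm, this forces (i) $v.c\not=\bot$ and $v.c\le\delta(v)$, so that $v.c\in\{0,1,\ldots,\delta(v)\}\subseteq\{0,1,\ldots,\Delta\}$, and (ii) $v.c\not\in v.tabu$, where $v.tabu$ is the set of colors received in round $R$ from those neighbors that sent $final=true$. Fact (i) already shows that every node holds a color from the $\Delta+1$-element palette and that all $final$ flags are set.

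It then remains to verify that adjacent nodes carry distinct colors. Let $v$ and $w$ be adjacent; both pause in round $R$, so by the previous paragraph both broadcast $(c,true)$ at the start of round $R$ and both hold a genuine color (in particular $\not=\bot$). Consequently, when $v$ processes the message from $w$, the color $w.c$ is inserted into $v.tabu$. Combining this with fact (ii), namely $v.c\not\in v.tabu$, yields $v.c\not=w.c$. Thus every edge is properly colored. Together with fact (i) this means the configuration is a legal $\Delta+1$-coloring with $v.final=true$ for every $v$, i.e.\ legitimate, which is the claim.

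The only delicate point is the step linking the tabu sets to a proper coloring: it relies on the fact that, once all nodes have terminated, \emph{every} neighbor's $final$ flag is $true$, so every neighbor's color genuinely enters the relevant tabu set. This is exactly where I would lean on Lemma~\ref{lem:silent} to rule out the spurious ``pause'' of a node whose $final$ flag is $false$ (which the code would permit only if a freshly drawn random color happened to equal the old one); excluding this case is what guarantees that the tabu-based argument applies uniformly to all of $v$'s neighbors in round $R$.
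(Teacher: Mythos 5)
Your proposal is correct and follows essentially the same route as the paper's proof: pick a round in which all nodes pause, invoke Lemma~\ref{lem:silent} to get $v.final=true$, read off $v.c\in\{0,1,\ldots,\delta(v)\}$ and $v.c\not\in v.tabu$ from the pause condition, and conclude properness because every neighbor's $(w.c,true)$ message places $w.c$ in $v.tabu$. Your additional bookkeeping (taking $R=\max_v r_v$ and noting the configuration is stationary from round $R$ on) merely makes explicit what the paper leaves implicit.
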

\begin{proof}
  Let $r$ be a round in which all nodes pause. By
  Lemma~\ref{lem:silent} all $v\in V$ satisfy $v.final=true$ in this
  round. Furthermore, since no node changes variable $c$ in round $r$,
  $v.c\not\in v.tabu$ for each $v\in V$. Thus, $v.c\not=w.c$ for each
  $w\in N(v)$ and therefore variable $c$ constitutes a valid coloring.
  Finally, note that because of $v.c\in \{0,1,\ldots,\delta(v)\}$ at most
  $\Delta +1$ colors are used.
\end{proof}

\begin{proof}[Theorem~\ref{theo:stab_time}]
  According to Lemma~\ref{lem:vlaid_color} it suffices to prove that
  all nodes terminate within $O(\log n)$ time with high probability.
  Let $v\in V$.  Lemma~\ref{lem:terminate} implies that
  the probability that $v$ terminates in round $r>1$ is equal to the
  probability that $v$ sets its variable $v.final$ to $true$ in round
  $r-1$. This is the probability that $v$ selects a color different
  from $\bot$ and from the selections of all neighbors that chose a
  value different from $\bot$ in round $r-2$. Suppose that indeed
  $v.c\not=\bot$ at the end of round $r-2$. Then $v.c\not\in v.tabu$.
  The probability that a given neighbor $u$ of $v$ selects the same
  color $u.c = v.c$ in this round is at
  most $\frac{1}{2(\delta(v)+1-|v.tabu|)}$. This is because the
  probability that $u$ selects a color different from $\bot$ is $1/2$,
  and $v$ has $\delta(v) + 1 - |v.tabu|$ different colors to select from.
  Since $r>1$ all nodes in $v.tabu$ have $final = true$ and will never
  change this value. Thus, at most $\delta(v) - |tabu|$ neighbors select a
  new color.
By the union bound, the probability that
  $v$ selects the same color as a neighbor
  is at most \[\frac{\delta(v)-|v.tabu|}{2(\delta(v)+1-|v.tabu|)} <
  \frac{1}{2}.\] Thus, if $v$ selects a color $v.c\not=\bot$, it is
  distinct from the colors of its neighbors with probability at least
  $1/2$. It holds that $v.c\not=\bot$ with probability $1/2$. Hence, $v$
  terminates with probability at least $1/4$.

  The probability that a specific node $v$ doesn't terminate within
  $r$ rounds is at most $(3/4)^r$ . By the union bound, the
  probability that there exists a vertex $v\in V$ that does not
  terminate within $r$ rounds is at most $n (3/4)^r$. Hence, ${\cal
    A}_{col}$ terminates after $r=(c + 1) 4\log n$ rounds, with
  probability at least $1 - n (3/4)^r \ge 1 - 1/n^c$ (note that $\log 4/3
  > 1/4$).
\end{proof}

\section{Fault Containment of Algorithm \AC}
In this section the fault containment behavior of \AC is analyzed. In
particular we consider a legitimate configuration in which a single
transient error occurs. Two types of transient errors are considered:
\begin{enumerate}
\item Memory corruption at node $v$, i.e., the value of at
  least one of the two variables of $v$ is corrupted.
\item A broadcast message sent by $v$ is corrupted. Note that the
  alternative implementation of using $\delta(v)$ unicast messages
  instead a single broadcast has very good fault containment behavior
  but is much slower.
\end{enumerate}
The {\em independent degree} $\delta_i(v)$ of a node $v$ is the size
of a maximum independent set of $N(v)$. Let $\Delta_i(G) =
\max\{\delta_i(v) \mid v\in V\}$.

\subsection{Message Corruption}
First consider the case that a single broadcast message sent by $v$ is
corrupted, i.e. the message contains a color $c_f$ different from
$v.c$ or the value $false$ for variable $final$. Since $w.final=true$
for all $w\in N(v)$ the message $(c_f,false)$ has no effect on any
$w\in N(v)$ regardless of the value of $c_f$. Thus, this corrupted
message has no effect at all.

Next consider the case that $v$ broadcasts the message $(c_f,true)$ with
$c_f\not=v.c$. Let $N_{conf}(v) = \{w\in N(v)\mid w.c=c_f\}$. The
nodes in $N_{conf}(v)$ form an independent set, because they all have
the same color. Thus $|N_{conf}(v)|\le \delta_i(v)$.

\begin{lemma}\label{lem:contain}
  The contamination radius after a single corruption of a broadcast
  message sent by node $v$ is $1$, in particular neither $v$ nor a
  node outside $N_{conf}(v)$ will change its state. At most
  $\delta_i(v)$ nodes change their state during recovery.
\end{lemma}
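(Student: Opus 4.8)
The plan is to show that the corrupted message $(c_f, true)$ affects only the nodes in $N_{conf}(v)$, and that all other nodes (including $v$ itself) pause in the round immediately following the fault and hence, by Lemma~\ref{lem:silent}, terminate.

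First I would analyze what each category of node sees at the beginning of the round in which the corrupted message arrives. Since the configuration was legitimate before the fault, every node had $final=true$ and a valid coloring. Node $v$ broadcasts the erroneous color $c_f$ but its own variables $v.c$ and $v.final$ are unchanged; moreover $v$ receives the (correct) messages of its neighbors, all of which still hold their legitimate colors distinct from $v.c$. Thus $v$ still satisfies $v.final=true$, $v.c\in\{0,\ldots,\delta(v)\}$, and $v.c\notin v.tabu$, so by the pausing condition established in the proof of Lemma~\ref{lem:silent}, $v$ pauses and therefore terminates. This justifies the claim that $v$ does not change its state.

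Next I would examine a neighbor $w\in N(v)$. Such a $w$ receives $(c_f,true)$ from $v$, so it enters $c_f$ into $w.tabu$. If $w\notin N_{conf}(v)$, then $w.c\neq c_f$, so the spurious tabu entry is irrelevant: $w$ still has $w.c\notin w.tabu$ (its color conflicts with no true-final neighbor, since the coloring was legitimate and only $v$'s message was corrupted), $w.final=true$ still holds, and $w$ pauses, hence terminates. If instead $w\in N_{conf}(v)$, then $w.c=c_f\in w.tabu$, the condition $c\in tabu$ fires, and $w$ sets $w.final:=false$ and may draw a new color via \randomColor. These are the only nodes that change state, so the set of state-changing nodes is contained in $N_{conf}(v)$, which is an independent subset of $N(v)$ of size at most $\delta_i(v)$; this gives both the contamination radius $1$ and the bound of $\delta_i(v)$ changed nodes.

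Finally I must confirm that the disturbance does not propagate outward, i.e.\ that nodes at distance $2$ are unaffected. A node $u$ at distance $2$ neighbors some $w\in N_{conf}(v)$ that may have withdrawn its final choice; I would argue that even though $w$ now advertises $final=false$ (and possibly a changed color), this cannot destabilize $u$: since the coloring was legitimate, $u.c\neq w.c$ already, so $w$'s new broadcast never places $u.c$ into $u.tabu$, and $u$ continues to satisfy its pausing condition. The main obstacle is this last step, because one must carefully track, over the round in which recovery begins, exactly which $final$ and color values the nodes in $N_{conf}(v)$ broadcast, and verify that none of these can force a second-hop node to change state — in particular ruling out the subtle case where a recovering node in $N_{conf}(v)$ temporarily adopts a color matching a distance-$2$ node. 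Here I would lean on the fact that a recovering node either keeps a color already distinct from $u$'s or sends $\bot$/$false$, neither of which can trigger a change at $u$.
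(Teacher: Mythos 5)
Your overall plan---a first-round case analysis for each category of node, plus a non-propagation argument---matches the paper's proof in spirit, and your treatment of the neighbors ($w\in N_{conf}(v)$ versus $w\notin N_{conf}(v)$) and of distance-$2$ nodes is essentially sound. The genuine gap is in your first step, the treatment of $v$ itself. You invoke Lemma~\ref{lem:silent} (``pauses implies terminates'') in the very round in which the corrupted broadcast is delivered, but that round is not error-free from $v$'s side: the proof of Lemma~\ref{lem:silent} hinges on the fact that $v$ sent $(v.c, true)$ to its neighbors, which is exactly what fails here, since $v$ sent $(c_f, true)$. This is not a formality. Because $v$'s broadcast carried $c_f$ instead of $v.c$, the color $v.c$ is missing from the tabu set of every $w\in N_{conf}(v)$ in that round, so a recovering neighbor may legitimately draw $v.c$ as its new color. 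This is precisely the scenario that could destabilize $v$, and your proof never rules it out for $v$. Ironically, you do flag the analogous ``temporarily adopts a matching color'' case at distance $2$, where it is in fact vacuous (there $u.c$ lies in $w$'s tabu throughout, because $u$ keeps broadcasting $(u.c,true)$); the only node whose color a recovering node can actually adopt, even temporarily, is $v$.

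The paper devotes the second half of its proof to closing exactly this hole: in the worst case some $w\in N_{conf}(v)$ chooses $v.c$ and broadcasts $(v.c, false)$; since $v.tabu$ only collects colors announced with $final=true$, this cannot force $v$ to change, so $v$ keeps broadcasting $(v.c,true)$; consequently $v.c$ remains in $w$'s occupied and tabu sets from the next round on, so $w$ can never reach the state $w.c=v.c$ with $w.final=true$, and hence $v$ never changes state. Your argument is repairable along the same lines: verify directly that at the end of the fault round no neighbor holds $w.c=v.c$ with $w.final=true$ (conflict nodes have set $final:=false$; the others pause with colors distinct from $v.c$), conclude that $v$ also pauses in the first post-fault round, and only then apply Lemma~\ref{lem:silent} to the error-free suffix of the execution. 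As written, however, the step ``$v$ pauses and therefore terminates'' is unjustified at the one node where the lemma's hypothesis is violated.
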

\begin{proof}
  Let $u\in V\backslash N[v]$. This node continues to send
  $(u.c,true)$ after the fault. Thus, a neighbor of $u$ that changes
  its color will not change its color to $u.c$. This yields that no
  neighbor of $u$ will ever send a message with $u.c$ as the first
  parameter. This is also true in case $u\in N(v)\backslash
  N_{conf}(v)$. Hence, no node outside $N_{conf}(v) \cup\{v\}$ will
  change its state, i.e.\ the contamination radius is $1$.

  Next consider the node $v$ itself. Let $w\in N_{conf}(v)$. When the
  faulty message is received by $w$ it sets $w.final$ to false. Before
  the faulty message was sent no neighbor of $v$ had the same color as
  $v$. Thus, in the worst case a node $w\in N_{conf}(v)$ will choose
  $v.c$ as its new color and send $(v.c, false)$ to all neighbors.
  Since $v.final=true$ this will not force $v$ to change its state.
  Thus, $v$ keeps broadcasting $(v.c,true)$ and therefore no neighbor
  $w$ of $v$ will ever reach the state $w.c=v.c$ and $w.final=true$.
  Hence $v$ will never change its state.
\end{proof}

Theorem~\ref{theo:stab_time} implies that the containment time of
this fault is $O(\log \delta_i(v))$ with high probability. The
following lemma gives a more precise bound.

\begin{lemma}\label{lem:cor_mess}
  The expected value for the containment time after a corruption of a
  message broadcasted by node $v$ is at most $\frac{1}{\ln 2}
  H_{\delta_i(v)}+1/2$ rounds ($H_i$ denotes the $i^{th}$ harmonic
  number) with a variance of at most \[ \frac{1}{\ln^2 2}\sum_{i=1}^{\delta_i(v)}
  \frac{1}{i^2} +\frac{1}{4} \le \frac{\pi^2}{6\ln^2
    2}+\frac{1}{4}\approx 3.6737.\]
  \end{lemma}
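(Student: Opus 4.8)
The plan is to reduce the whole recovery to the independent re-establishment of the nodes in $N_{conf}(v)$ and then to analyze the maximum of their recovery times. By Lemma~\ref{lem:contain} the node $v$ and every node outside $N_{conf}(v)$ keeps $final=true$ throughout the recovery, and exactly the $d\coloneqq|N_{conf}(v)|\le\delta_i(v)$ nodes of $N_{conf}(v)$ re-color. First I would show that these nodes recover independently and that each one fails to fix a final color in a given round with probability exactly $1/2$: since all neighbours of such a node $w$ keep $final=true$, the set $w.tabu$ always equals the set of colors of $N(w)$, so every color returned by \texttt{randomColor} that differs from $\bot$ is conflict-free and lets $w$ set its $final$ flag to $true$ in the next round (Lemma~\ref{lem:terminate}). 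Hence the only source of delay is drawing $\bot$, which happens with probability $1/2$ independently each round, and the probability that a fixed node has not committed a final color after $t$ rounds is exactly $2^{-t}$.

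The containment time is therefore, up to the single deterministic confirmation round, the maximum of $d$ independent recovery times whose tail is $2^{-t}=e^{-t\ln 2}$. The key step is to compute this maximum. I would model each recovery time by an exponential variable $X_w$ of rate $\ln 2$ (which has exactly this tail in continuous time) and use the R\'enyi representation of order statistics: the maximum of $d$ i.i.d.\ $\mathrm{Exp}(\ln 2)$ variables is distributed as $\sum_{k=1}^{d} Y_k/k$ with $Y_1,\dots,Y_d$ independent $\mathrm{Exp}(\ln 2)$ variables. Because the summands are independent, both moments drop out immediately: the mean of the maximum is $\frac{1}{\ln 2}\sum_{k=1}^{d}\frac1k=\frac{1}{\ln 2}H_{d}$ and its variance is $\frac{1}{\ln^2 2}\sum_{k=1}^{d}\frac{1}{k^2}$, matching the two leading terms of the claimed bounds.

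It remains to pass from the continuous maximum back to a whole number of rounds and to replace $d$ by $\delta_i(v)$. Rounding the continuous maximum up to the next integer round and adding the confirmation round contributes a bounded correction whose expectation is at most $1/2$ and whose variance is at most $1/4$ (it behaves like a single $\mathrm{Bernoulli}(1/2)$ partial round), which accounts for the additive $+1/2$ and $+1/4$. Finally, the maximum is monotone in the number of variables, so $d\le\delta_i(v)$ lets me enlarge the range of the sums to $\delta_i(v)$, and the variance term is closed off with $\sum_{k=1}^{\delta_i(v)}\frac{1}{k^2}\le\sum_{k=1}^{\infty}\frac{1}{k^2}=\frac{\pi^2}{6}$, giving the stated numerical value.

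I expect the discretization to be the main obstacle: the R\'enyi identity is exact only for the continuous exponential, where ties among the recovery times have probability zero, whereas in the actual algorithm several nodes may commit in the same round. Making the passage between the discrete round count and the continuous maximum rigorous --- and in particular justifying that the rounding and confirmation overhead is captured by the clean constants $1/2$ and $1/4$ rather than a looser additive term --- is the delicate part; everything else is bookkeeping on independent geometric trials and the convergent series $\sum 1/k^2$.
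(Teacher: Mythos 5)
Your reduction coincides with the paper's: by Lemma~\ref{lem:contain} only the independent set $N_{conf}(v)$ recolors, each of its $d\le\delta_i(v)$ nodes succeeds in each round independently with probability exactly $1/2$ (the only obstacle is drawing $\bot$, since all neighbors keep $final=true$ and so $tabu$ blocks every possible conflict), and the recovery time is the maximum of $d$ i.i.d.\ geometric variables. From there, however, you take a genuinely different computational route. The paper stays discrete: it expands $g(i)=G(i)-G(i-1)$ binomially, resums to $E[X_d]=\sum_{l\ge 0}\bigl(1-(1-q^l)^d\bigr)$, compares the series with an integral (Lemma~\ref{lem:harmo}), and for the variance needs three further lemmas (\ref{lem:var}--\ref{lem:comp2}) plus the identity $2\sum_{i=1}^d H_i/i-H_d^2=\sum_{i=1}^d 1/i^2$. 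You instead couple each geometric time exactly with an exponential, $Y_w=\lceil T_w\rceil$ with $T_w\sim\mathrm{Exp}(\ln 2)$, so that by monotonicity of the ceiling $X_d=\lceil\max_w T_w\rceil$ \emph{exactly}; this also disposes of your worry about ties, which concern only the continuous variables, where they are null events. R\'enyi's representation then delivers both moments of $M_d=\max_w T_w$ in one line: mean $H_d/\ln 2$, variance $\frac{1}{\ln^2 2}\sum_{k=1}^d 1/k^2$. This is shorter and makes the origin of the two leading terms transparent.

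The discretization step you flag as delicate is indeed the only real gap --- and it is precisely where the paper is also only heuristic. Your claim $E[\lceil M_d\rceil-M_d]\le 1/2$ is asserted, not proved, and it even fails marginally for small $d$: $E[X_1]=2$ while $H_1/\ln 2+1/2\approx 1.943$, and $E[X_2]=8/3\approx 2.667$ versus $\approx 2.664$. The paper has the same defect: its rigorous statement (Lemma~\ref{lem:harmo}) only yields the additive constant $1$, the $+1/2$ coming from a trapezoid approximation, and its variance derivation likewise proceeds via ``$\approx$''. So your proposal lands at the same level of rigor as the paper, but note two bookkeeping points. First, for the variance you must control the covariance: with $R=\lceil M_d\rceil-M_d$ one has $\mathrm{Var}[X_d]=\mathrm{Var}[M_d]+\mathrm{Var}[R]+2\,\mathrm{Cov}(M_d,R)$, and $\mathrm{Var}[R]\le 1/4$ alone does not give the claimed bound unless you argue the covariance is nonpositive (numerically it is, cf.\ Fig.~\ref{fig:Var_app}, but that needs an argument). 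Second, you cannot absorb both the ceiling correction and the deterministic confirmation round into the same $+1/2$: if the extra round in which a recovered node sets $final$ to $true$ were counted, the bound would need an additional $+1$. The paper sidesteps this by defining $X_d$ as the time to reach a legal \emph{coloring} only; you should adopt the same convention rather than charge the confirmation round to the rounding term.
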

\begin{proof}
  After receiving message $(c_f,true)$ all nodes $w\in N_{conf}(v)$
  set $w.final$ to $false$ and with equal probability $w.c$ to $\bot$
  or to a random color $c_w\in \{0,1,\ldots, \delta(w)\}\backslash
  w.tabu$. Note that $|w.tabu|\le \delta(w)$ because $w.tabu = \{u.c \mid
  u\in N(w) \backslash v\} \cup \{c_f\}$. If $w$ chooses a color
  different from $\bot$ then this color is different from the colors
  of all of $w$'s neighbors. Also in this case $w$ will terminate
  after the following round because then it will set $final$ to
  $true$. Thus, after one round $w$ has chosen a color that is
  different from the colors of all neighbors with probability at least
  $1/2$. Furthermore, this color will not change again. After one
  additional round $w$ reaches a legitimate state.

  Let the random variable $X_d$ with $d=|N_{conf}(v)|$ denote the
  number of rounds until the system has reached a legal coloring.
  For $w\in N_{conf}(v)$ let $Y_w$ be the random variable denoting the
  number of rounds until $w$ has a legal coloring. By
  Lemma~\ref{lem:contain} $X_d = \max \{Y_w\mid w\in N_{conf}(v)\}$.
  For $i\ge 1$ let $G(i) = P\{X_d\le i\} = P\{\max \{Y_w\mid w\in
  N_{conf}(v)\}\le i\}$. Since the random variables $Y_w$'s are
  independent $G(i) = (P\{X\le i\})^{|N_{conf}(v)|}+1$ where $X$ is a
  geometric random variable with $p=0.5$. Thus, \vspace*{-2mm}
  \[G(i) = \left(\sum_{j=1}^i p(1-p)^{j-1}\right)^d\] and $G(0) = 0$
  with $d=|N_{conf}(v)|$. Then $E[X_d] = \sum_{i=1}^\infty ig(i)$ with
  probability function $g(i) = P\{X_d=i\}$. Let $q = 1-p$. Now for $i\ge1$
\[g(i) = G(i) - G(i-1)=(1-q^i)^d -(1-q^{i-1})^d = \]\[=\sum_{j=0}^d \binom{d}{j}
(-1)^{j+1}(1-q^j)q^{j(i-1)}  =\sum_{j=1}^d \binom{d}{j}
\frac{(-1)^{j+1}(1-q^j)}{q^j}(q^j)^i.\]

\noindent
This implies

\begin{align*}
E[X_d] &=   \sum_{j=1}^d\binom{d}{j}
\frac{(-1)^{j+1}(1-q^j)}{q^j} \sum_{i=1}^\infty i(q^{j})^i = \sum_{j=1}^d\binom{d}{j}
\frac{(-1)^{j+1}}{(1-q^j)}\\
&=\sum_{j=1}^d\binom{d}{j}
(-1)^{j+1}\sum_{l=0}^\infty (q^j)^l= \sum_{l=0}^\infty \sum_{j=1}^d\binom{d}{j}
(-1)^{j+1}(q^l)^j\\
&=\sum_{l=0}^\infty \left(1 + \sum_{j=0}^d\binom{d}{j}
(-1)^{j+1}(q^l)^j\right) = \sum_{l=0}^\infty (1-(1-q^l)^d)
\end{align*}
The result follows from Lemma~\ref{lem:harmo}. The derivation of the
formula for the variance can be found in Lemma~\ref{lem:var}.
\end{proof}

\begin{lemma}\label{lem:harmo}
  For fixed  $0<q<1$ and fixed $d\ge 1$
  \[\sum_{l=0}^\infty (1-(1-q^l)^d)\in [-\frac{1}{\ln q}
  H_d,-\frac{1}{\ln q} H_d+1] \text{  ~and~  } \sum_{l=0}^\infty
  (1-(1-q^l)^d) \approx -\frac{1}{\ln q} H_d +\frac{1}{2}.\]
\end{lemma}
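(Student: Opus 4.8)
The plan is to treat the series as a Riemann-type comparison with a monotone function and sandwich it by the corresponding improper integral. Set $f(x) = 1-(1-q^x)^d$ for $x \ge 0$. First I would record the shape of $f$: since $0<q<1$ we have $q^x\in(0,1]$, so $(1-q^x)^d$ increases from $0$ to $1$ as $x$ runs from $0$ to $\infty$; hence $f$ is positive, continuous, and strictly decreasing on $(0,\infty)$, with $f(0)=1$ and $\lim_{x\to\infty}f(x)=0$. A derivative check, $f'(x)=d(1-q^x)^{d-1}q^x\ln q<0$ using $\ln q<0$, confirms monotonicity.

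For a positive decreasing function the standard integral test gives $\int_0^\infty f(x)\,dx \le \sum_{l=0}^\infty f(l) \le f(0)+\int_0^\infty f(x)\,dx$, so that the sum lies in the length-one interval $[\,\int_0^\infty f,\ \int_0^\infty f+1\,]$ because $f(0)=1$. Everything therefore reduces to evaluating the integral.

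The key computation is $\int_0^\infty (1-(1-q^x)^d)\,dx$. I would substitute $u=q^x$, so that $dx = du/(u\ln q)$ and the limits $x:0\to\infty$ become $u:1\to0$; this turns the integral into $-\frac{1}{\ln q}\int_0^1 \frac{1-(1-u)^d}{u}\,du$. A second substitution $t=1-u$ rewrites the inner integral as $\int_0^1 \frac{1-t^d}{1-t}\,dt$, and expanding the geometric factor $\frac{1-t^d}{1-t}=\sum_{k=0}^{d-1}t^k$ and integrating termwise yields exactly $H_d$. Hence $\int_0^\infty f = -\frac{1}{\ln q}H_d$, which is positive since $\ln q<0$, and this establishes the claimed bracket $[-\frac{1}{\ln q}H_d,\ -\frac{1}{\ln q}H_d+1]$.

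For the approximate value $-\frac{1}{\ln q}H_d+\frac{1}{2}$ I would invoke the trapezoidal/Euler--Maclaurin correction: for a smooth monotone function the sum exceeds its integral by approximately $\tfrac{1}{2}(f(0)+f(\infty))=\tfrac{1}{2}$, which is precisely the midpoint of the rigorous interval just obtained. The main obstacle is the integral evaluation itself — the bracketing via the integral test is routine, but spotting that the chain of substitutions collapses the integral to the partial harmonic sum is where the real work lies; some care is also needed at $x=0$, where the factor $(1-q^x)^{d-1}$ vanishes for $d\ge2$, so monotonicity is best argued on $(0,\infty)$ and then extended to the closed endpoint by continuity.
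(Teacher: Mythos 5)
Your proposal is correct and follows essentially the same route as the paper: sandwich the sum by $\int_0^\infty \bigl(1-(1-q^x)^d\bigr)\,dx$ via the integral test for the decreasing function $f$ with $f(0)=1$, evaluate the integral by substitution (your two-step $u=q^x$, $t=1-u$ is the paper's single substitution $u=1-q^x$) and the geometric expansion $\frac{1-u^d}{1-u}=\sum_{i=0}^{d-1}u^i$ to get $-\frac{1}{\ln q}H_d$, and obtain the $+\frac{1}{2}$ correction from the trapezoidal approximation. Your derivative check of monotonicity and the endpoint remark at $x=0$ are minor additions to the paper's argument, not a different method.
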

\begin{proof}
  The function $f(x) = 1 - (1-q^x)^d$ is for fixed values of $d$
  decreasing for $x\ge0$. Furthermore, $f(0)=1$. Hence,
  \[\sum_{l=0}^\infty (1-(1-q^l)^d) \ge \int_0^\infty f(x) dx \ge \sum_{l=0}^\infty (1-(1-q^l)^d)-1.\] Using the
  substitution $u = 1-q^x$ the integral becomes

  \[-\frac{1}{\ln q} \int_0^\infty \frac{1-u^d}{1-u} du =
  -\frac{1}{\ln q} \int_0^1 \sum_{i=0}^{d-1} u^idu = -\frac{1}{\ln
    q} \sum_{i=1}^{d} \frac{1}{i} = -\frac{1}{\ln q} H_d.\]

Approximating $\int_i^{i+1} f(x)d x$ with $(f(i)+f(i+1))/2$
yields \[\sum_{l=0}^\infty (1-(1-q^l)^d) \approx \int_0^\infty f(x) dx
+ \frac{f(0)}{2} = -\frac{1}{\ln q} H_d +\frac{1}{2}\]
\end{proof}

\begin{lemma}\label{lem:var}
For $d>0$ the variance of the containment time is at most
 \[ Var[X_d]= \frac{1}{\ln^2 2}\sum_{i=1}^d \frac{1}{i^2} +\frac{1}{4} \le
  \frac{\pi^2}{6\ln^2 2}+\frac{1}{4}\approx 3.6737.\]
\end{lemma}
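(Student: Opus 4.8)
The plan is to compute the first two moments of $X_d$ from its distribution function and then read off the variance, following the same integral-approximation technique already used for the mean in Lemma~\ref{lem:harmo}. Writing $q=1/2$ and $a_m = P\{X_d>m\} = 1-(1-q^m)^d$ (so that $G(m)=1-a_m$ as in Lemma~\ref{lem:cor_mess}), I would invoke the tail formulas for non-negative integer variables, $E[X_d]=\sum_{m\ge0}a_m$ and $E[X_d^2]=\sum_{m\ge0}(2m+1)a_m$; the latter follows from $k^2=\sum_{m=0}^{k-1}(2m+1)$ and interchanging the order of summation. Setting $S_1=\sum_{m\ge0}a_m$ and $S_2=\sum_{m\ge0} m\,a_m$, this gives $E[X_d^2]=2S_2+S_1$ and hence $Var[X_d]=2S_2+S_1-S_1^2$, reducing everything to the two series $S_1$ and $S_2$.

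Next I would approximate each series by its integral exactly as in Lemma~\ref{lem:harmo}. With $A(x)=1-(1-q^x)^d$, the trapezoidal estimate $\sum_{m\ge0}f(m)\approx\int_0^\infty f(x)\,dx+f(0)/2$ gives $S_1\approx I_1+\tfrac12$ (since $A(0)=1$) and $S_2\approx I_2$ (since the integrand $x\,A(x)$ vanishes at $x=0$), where $I_1=\int_0^\infty A\,dx$ and $I_2=\int_0^\infty x\,A(x)\,dx$. Substituting into $Var[X_d]=2S_2+S_1-S_1^2$ and expanding $(I_1+\tfrac12)^2=I_1^2+I_1+\tfrac14$, the terms linear in $I_1$ cancel and I obtain $Var[X_d]\approx 2I_2-I_1^2+\tfrac14$; this is precisely where the additive constant $\tfrac14$ originates.

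It then remains to show $2I_2-I_1^2=\frac{1}{\ln^2 2}\sum_{i=1}^d 1/i^2$. For $I_1$ the substitution $u=1-q^x$ already used in Lemma~\ref{lem:harmo} gives $I_1=\frac{1}{\ln 2}H_d$. The same substitution turns $I_2$ into $-\frac{1}{\ln^2 2}\int_0^1\frac{\ln(1-u)(1-u^d)}{1-u}\,du$, and writing $\frac{1-u^d}{1-u}=\sum_{i=0}^{d-1}u^i$ together with the standard evaluation $\int_0^1 u^i\ln(1-u)\,du=-H_{i+1}/(i+1)$ yields $I_2=\frac{1}{\ln^2 2}\sum_{k=1}^d H_k/k$. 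The final algebraic step is the classical harmonic identity $2\sum_{k=1}^d H_k/k - H_d^2=\sum_{k=1}^d 1/k^2$ (equivalently $\sum_{k\le d}H_k/k=\tfrac12(H_d^2+\sum_{k\le d}1/k^2)$), which collapses $2I_2-I_1^2$ to $\frac{1}{\ln^2 2}\sum_{i=1}^d 1/i^2$; bounding $\sum_{i=1}^d 1/i^2\le\pi^2/6$ then gives the stated constant $\approx 3.6737$. I expect the main obstacle to be the closed-form evaluation of $I_2$ and the harmonic identity that makes $2I_2-I_1^2$ telescope to $\sum 1/i^2$; a secondary delicate point is that the trapezoidal step is used as an approximate equality, exactly as in the proof of Lemma~\ref{lem:cor_mess}, so the $\tfrac14$ is the trapezoidal-corrected value rather than a crude upper bound. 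As a consistency check, $2I_2-I_1^2$ is exactly the variance of $\max_{1\le i\le d}\tilde Y_i$ for independent $\tilde Y_i\sim\mathrm{Exp}(\ln 2)$, whose Rényi spacing representation $\sum_{k=1}^d E_k/(k\ln 2)$ makes the value $\frac{1}{\ln^2 2}\sum 1/k^2$ transparent.
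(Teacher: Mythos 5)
Your proposal is correct and lands on exactly the paper's formula, but it replaces the paper's key intermediate step with a more elementary one, so the two arguments genuinely differ in one place. Writing $a_l=1-(1-q^l)^d$, the paper obtains $E[X_d^2]=\sum_l(2l+1)a_l$ through Lemma~\ref{lem:comp1}, i.e.\ by expanding the probability mass function $g(i)$ as an alternating binomial sum and resumming with the generating-function identities $\sum_l x^l=1/(1-x)$ and $\sum_l l x^l=x/(1-x)^2$; you get the same identity in one line from the tail formula $E[X_d^2]=\sum_{m\ge0}(2m+1)P\{X_d>m\}$ via $k^2=\sum_{m=0}^{k-1}(2m+1)$ and an interchange of summation. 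Besides being shorter, your route silently repairs a pair of canceling off-by-one slips in the paper: Lemma~\ref{lem:comp1} as stated drops the $l=0$ term of the sum (which equals $1$, since $a_0=1$), and the proof of Lemma~\ref{lem:var} then compensates by writing $\sum_{l\ge1}(2l+1)a_l=2\sum_{l\ge1}l\,a_l+E[X_d]$ even though $\sum_{l\ge1}a_l=E[X_d]-1$; your bookkeeping $E[X_d^2]=2S_2+S_1$ with $S_1=E[X_d]$ is exact, and the net identity agrees. From there on the two proofs coincide: your evaluation of $I_2$ via the substitution $u=1-q^x$ and $\int_0^1 u^i\ln(1-u)\,du=-H_{i+1}/(i+1)$ is literally Lemma~\ref{lem:comp2}, $I_1=H_d/\ln 2$ is Lemma~\ref{lem:harmo}, and the collapse $2\sum_{k\le d}H_k/k-H_d^2=\sum_{k\le d}1/k^2$ is the same harmonic identity the paper verifies by expansion (your R\'enyi check on the maximum of $\mathrm{Exp}(\ln 2)$ variables is a pleasant independent confirmation, since $X_d$ is the maximum of i.i.d.\ geometrics). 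Your explicit trapezoidal accounting ($S_1\approx I_1+\tfrac12$ because $A(0)=1$, and $S_2\approx I_2$ because $xA(x)$ vanishes at $0$) also makes the provenance of the additive $\tfrac14$ more transparent than the paper's substitution of $E[X_d]\approx H_d/\ln 2+\tfrac12$ from Lemma~\ref{lem:cor_mess}. One caveat applies equally to both proofs, and you flag it correctly: the trapezoidal step is used as an approximate equality (the integrand $xA(x)$ is unimodal rather than monotone, and the inequality claimed in Lemma~\ref{lem:comp2} is justified only heuristically by a figure), so the lemma's ``at most'' is really established at the level of $\approx$ in the paper as well; your argument is at exactly the paper's level of rigor here, no worse.
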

\begin{proof}\vspace*{-6mm}
  \begin{align*}
Var[X_d] &= E[X_d^2] - E[X_d]^2= \sum_{i=1}^\infty i^2 g(i) - E[X_d]^2
\end{align*}
By Lemma~\ref{lem:comp1}
\begin{align*}  
    \sum_{i=1}^\infty i^2 g(i) &=  \sum _{l=1}^\infty (2l+1) (1-(1-q^l)^d)
=  2\sum _{l=1}^\infty l (1-(1-q^l)^d) + E[X_d]
  \end{align*}
Now Lemma~\ref{lem:comp2} yields
\begin{align*}
Var[X_d] &\approx \frac{2}{\ln^2 2}\sum_{i=1}^d \frac{H_i}{i} + E[X_d] - E[X_d]^2\\
&\approx \frac{2}{\ln^2 2}\sum_{i=1}^d \frac{H_i}{i} + \frac{H_d}{\ln 2} + \frac{1}{2} - \left( \frac{H_d}{\ln 2} + \frac{1}{2}\right)^2 & \text{Lemma~\ref{lem:cor_mess}}\\
& = \frac{1}{\ln^2 2}\left(2\sum_{i=1}^d \frac{H_i}{i} - H_d^2\right) +\frac{1}{4}
\end{align*}
\begin{align*}
  2\sum_{i=1}^d \frac{H_i}{i} -H_d^2 &= 2\sum_{i=1}^d \sum_{j=1}^i \frac{1}{ij} -  (1 + \frac{1}{2} + \cdots + \frac{1}{d})^2\\
&= 2\sum_{i=1}^d \frac{1}{i^2} + 2\sum_{i=1}^d \sum_{j=1}^{i-1} \frac{1}{ij}  - 2\sum_{i=1}^d \sum_{j=1}^{i-1} \frac{1}{ij}- \sum_{i=1}^d \frac{1}{d^2}= \sum_{i=1}^d \frac{1}{i^2}=\frac{\pi^2}{6}
\end{align*}
\end{proof}

\begin{lemma}\label{lem:comp1}
 Let $d>0$, $q\in (0,1)$ and $g(i)=\sum\limits_{j=1}^d \binom{d}{j}
\frac{(-1)^{j+1}(1-q^j)}{q^j}(q^j)^i$. Then
  \[\sum_{i=1}^\infty i^2 g(i) = \sum _{l=1}^\infty (2l+1) (1-(1-q^l)^d)\]
\end{lemma}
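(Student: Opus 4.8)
The plan is to recognize the coefficient structure of $g$ and reduce the identity to a comparison of two elementary power series in the variable $q^j$. First I would recall, exactly as in the derivation preceding Lemma~\ref{lem:cor_mess}, the telescoping identity $g(i) = (1-q^i)^d - (1-q^{i-1})^d$, so that $g$ is the probability mass function of an integer-valued random variable $X$ with distribution function $G(i) = (1-q^i)^d$ and $G(0)=0$. Hence $\sum_{i=1}^\infty i^2 g(i) = E[X^2]$, and the cleanest route is the tail-sum identity $E[X^2] = \sum_{l=1}^\infty (2l-1)\,P(X\ge l)$, which follows from $n^2 = \sum_{l=1}^n(2l-1)$ by interchanging the (absolutely convergent) sums. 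Since $P(X\ge l) = 1-G(l-1) = 1-(1-q^{l-1})^d$, reindexing with $m=l-1$ turns the right-hand side into $\sum_{m=0}^\infty (2m+1)\,(1-(1-q^m)^d)$, which is the asserted expression.

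A second, more computational route mirrors the manipulations used elsewhere in the paper and serves as a cross-check. Interchanging the finite sum over $j$ with the convergent sum over $i$ (legitimate for $0<q<1$) and using $\sum_{i=1}^\infty i^2 x^i = \frac{x(1+x)}{(1-x)^3}$ at $x=q^j$, the left-hand side collapses to $\sum_{j=1}^d \binom{d}{j}(-1)^{j+1}\frac{1+q^j}{(1-q^j)^2}$. For the right-hand side I would expand $1-(1-q^l)^d = \sum_{j=1}^d\binom{d}{j}(-1)^{j+1}q^{lj}$, interchange the sums again, and apply $\sum_{l=1}^\infty (2l+1)x^l = \frac{x(3-x)}{(1-x)^2}$ to rewrite it as $\sum_{j=1}^d\binom{d}{j}(-1)^{j+1}\frac{q^j(3-q^j)}{(1-q^j)^2}$.

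The main obstacle is the bookkeeping of the constant term, and this is the one place where care is genuinely required. The two rational functions above do not agree termwise: a short computation gives \[\frac{1+q^j}{(1-q^j)^2} - \frac{q^j(3-q^j)}{(1-q^j)^2} = \frac{(1-q^j)^2}{(1-q^j)^2} = 1,\] so the two representations differ by $\sum_{j=1}^d\binom{d}{j}(-1)^{j+1}$, which equals $1$ since $\sum_{j=0}^d\binom{d}{j}(-1)^j=(1-1)^d=0$. In the probabilistic route this same unit appears as the boundary contribution at $l=0$, namely $(2\cdot 0+1)(1-(1-q^0)^d)=1$. Thus the delicate point is the summation boundary: both routes deliver $\sum_{l=0}^\infty (2l+1)(1-(1-q^l)^d)$, and this matches the stated right-hand side once the $l=0$ term (whose value is $1$) is included. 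Tracking this single unit is exactly where an off-by-one can slip in, and once it is accounted for both derivations yield the claimed formula.
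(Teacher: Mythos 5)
Your proof is correct, and your insistence on tracking the constant term exposes a genuine off-by-one in the lemma as printed: the identity holds with $\sum_{l=0}^\infty (2l+1)(1-(1-q^l)^d)$ on the right-hand side, whose $l=0$ term equals $1$, not with the stated $\sum_{l=1}^\infty$. A sanity check at $d=1$, $q=\tfrac12$ confirms this: the left side is $\sum_{i\ge1} i^2 2^{-i} = 6$, while the stated right side is $\sum_{l\ge1}(2l+1)2^{-l}=5$. Your second, computational route is essentially the paper's own proof — it too interchanges the sums, applies $\sum_{i\ge1} i^2 x^i = \frac{2x^2}{(1-x)^3}+\frac{x}{(1-x)^2}$ at $x=q^j$, rewrites each $j$-term as $\sum_{l\ge 0}(2l+1)(q^j)^l$, and interchanges again — but in passing from its fourth to its fifth displayed line the paper silently changes the lower limit from $l=0$ to $l=1$, discarding exactly the unit $(2\cdot 0+1)\sum_{j=1}^d\binom{d}{j}(-1)^{j+1}=1$ that you computed. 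Your first route, recognizing $g$ as the law of the maximum of $d$ i.i.d.\ geometric variables and invoking the tail-sum identity $E[X^2]=\sum_{l\ge1}(2l-1)P(X\ge l)$, is a genuinely different and cleaner argument: every term is nonnegative, so the interchange requires no care with alternating binomial sums, and the boundary unit appears transparently as $P(X\ge 1)=1$ rather than hiding inside a vanishing alternating sum.

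It is worth adding that the paper's slip is ultimately harmless downstream, for a reason you may find amusing: in Lemma~\ref{lem:var} the quantity $\sum_{l\ge1}(2l+1)(1-(1-q^l)^d)$ is rewritten as $2\sum_{l\ge1} l\,(1-(1-q^l)^d) + E[X_d]$, which implicitly treats $\sum_{l\ge1}(1-(1-q^l)^d)$ as $E[X_d]$, even though by the derivation in Lemma~\ref{lem:cor_mess} it equals $E[X_d]-1$. The two off-by-ones cancel, so the expression actually used for $E[X_d^2]$, and hence the variance bound, are correct. Repairing the statement of the present lemma to start the sum at $l=0$ (equivalently, adding $1$ to the right-hand side, as your argument shows) would let the computation in Lemma~\ref{lem:var} go through without the compensating slip.
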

\begin{proof}
  \begin{align*}
    \sum_{i=1}^\infty i^2 g(i) &= \sum_{j=1}^d\binom{d}{j}
    \frac{(-1)^{j+1}(1-q^j)}{q^j} \sum_{i=1}^\infty i^2(q^{j})^i& \\
    & = \sum_{j=1}^d\binom{d}{j}
    \frac{(-1)^{j+1}(1-q^j)}{q^j} \left(\frac{2q^{2j}}{(1-q^j)^3} + \frac{q^{j}}{(1-q^j)^2}\right)& \\
    & = \sum_{j=1}^d\binom{d}{j} (-1)^{j+1} \left( \frac{2q^{j}}{(1-q^j)^2} + \frac{1}{1-q^j}\right) \\
&= \sum_{j=1}^d\binom{d}{j} (-1)^{j+1}  \sum_{l=0}^\infty (2l+1)(q^j)^l &\\
&= \sum _{l=1}^\infty (2l+1) \sum_{j=1}^d\binom{d}{j} (-1)^{j+1} (q^l)^j \\
&= \sum _{l=1}^\infty (2l+1) (1-(1-q^l)^d)
  \end{align*}
 For the first equation we refer to the proof of
 Lemma~\ref{lem:cor_mess}. The second equality makes use
 of \[\sum_{i=1}^\infty i^2x^i = \frac{2x^{2}}{(1-x)^3} +
 \frac{x}{(1-x)^2}\] and the fourth equality uses the following two
 identities \[\sum _{l=0}^\infty x^l=\frac{1}{(1-x)}  \text{ and } \sum _{l=0}^\infty lx^l=\frac{x}{(1-x)^2}.\]
\end{proof}

\begin{lemma}\label{lem:comp2} Let $d>0$ and $q\in (0,1)$ then
  \[\sum _{l=1}^\infty l (1-(1-q^l)^d) \le \frac{1}{(\ln q)^2}\sum_{i=1}^{d} \frac{H_i}{i} \]
\end{lemma}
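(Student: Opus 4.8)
The plan is to identify the right-hand side as an integral and then to bound the left-hand sum by that integral.

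First I would prove the identity
\[\frac{1}{(\ln q)^2}\sum_{i=1}^d\frac{H_i}{i}=\int_0^\infty x\bigl(1-(1-q^x)^d\bigr)\,dx .\]
The substitution $u=1-q^x$ (so that $x=\ln(1-u)/\ln q$ and $dx=-du/((1-u)\ln q)$), the same one already used in Lemma~\ref{lem:harmo}, turns the integral into $\frac{-1}{(\ln q)^2}\int_0^1\frac{\ln(1-u)\,(1-u^d)}{1-u}\,du$. Expanding $\frac{1-u^d}{1-u}=\sum_{i=0}^{d-1}u^i$ and inserting the elementary evaluation $\int_0^1 u^i\ln(1-u)\,du=-\frac{H_{i+1}}{i+1}$ (obtained from $\ln(1-u)=-\sum_{k\ge1}u^k/k$ together with the telescoping $\sum_{k\ge1}\frac{1}{k(k+i+1)}=\frac{H_{i+1}}{i+1}$) reproduces $\frac{1}{(\ln q)^2}\sum_{i=1}^d\frac{H_i}{i}$ term by term.

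With this identity the lemma reduces to the analytic inequality $\sum_{l=1}^\infty g(l)\le\int_0^\infty g(x)\,dx$ for $g(x)=x\bigl(1-(1-q^x)^d\bigr)$. I would note at once that $g\ge0$, that $g(0)=0$, and that $g(x)\to0$ as $x\to\infty$, so $g$ is \emph{unimodal} rather than monotone. This is the main obstacle: the integrand of Lemma~\ref{lem:harmo} was decreasing, which made the termwise bound $g(l)\ge\int_l^{l+1}g$ and hence the sum--integral comparison immediate, whereas for a function that first rises and then decays no such one-line comparison is available, and splitting the sum at the peak overshoots the integral by roughly one peak value --- far more than the true gap, which numerics show to be close to $\tfrac1{12}$.

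To capture the correct (small) gap I would use the trapezoidal comparison in its exact Peano-kernel form. Since $g(0)=0$, telescoping gives $\sum_{l\ge1}g(l)=\sum_{l\ge0}\tfrac12\bigl(g(l)+g(l+1)\bigr)$, and subtracting $\int_0^\infty g=\sum_{l\ge0}\int_l^{l+1}g$ interval by interval yields
\[\sum_{l=1}^\infty g(l)-\int_0^\infty g(x)\,dx=\tfrac12\int_0^\infty g''(t)\,w(t)\,dt,\qquad w(t)=\{t\}\bigl(1-\{t\}\bigr)\ge0 .\]
The inequality is thus equivalent to $\int_0^\infty g''(t)\,w(t)\,dt\le0$. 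Here I would use that $g''$ changes sign once, being negative near the peak and positive in the convex tail, and that $\int_0^\infty g''\,dt=g'(\infty)-g'(0)=-1$ (one checks $g'(0)=1$ and $g'(\infty)=0$). Splitting at the inflection point $x_0$, the concave part contributes $\int_0^{x_0}g''w\le0$ since $g''\le0$ and $w\ge0$, while the convex tail contributes at most $\tfrac14\int_{x_0}^\infty g''\,dt=\tfrac14\bigl(-g'(x_0)\bigr)$ because $0\le w\le\tfrac14$. The crux of the whole argument --- the step I expect to be hardest --- is then the quantitative estimate showing that the negative concave contribution dominates the positive tail term $\tfrac14\lvert g'(x_0)\rvert$, so that the total defect is nonpositive. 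Once this sign is established the lemma follows, and specializing to $q=1/2$ supplies the bound used in Lemma~\ref{lem:var}.
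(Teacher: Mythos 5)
Your reduction is sound as far as it goes, and its first half is exactly the paper's computation: the substitution $u=1-q^x$, the expansion $(1-u^d)/(1-u)=\sum_{i=0}^{d-1}u^i$, and the evaluation $\int_0^1 u^i\ln(1-u)\,du=-H_{i+1}/(i+1)$ do give $\int_0^\infty x\bigl(1-(1-q^x)^d\bigr)\,dx=\frac{1}{(\ln q)^2}\sum_{i=1}^{d}\frac{H_i}{i}$. The genuine gap is the one you flag yourself. Writing $g(x)=x\bigl(1-(1-q^x)^d\bigr)$, your Peano-kernel identity $\sum_{l\ge 1}g(l)-\int_0^\infty g=\frac12\int_0^\infty g''(t)\,\{t\}(1-\{t\})\,dt$ is correct, but the decisive inequality $\int_0^\infty g''(t)w(t)\,dt\le 0$ is never established, and the estimates you set up cannot establish it. From $g''\le 0$ on $[0,x_0]$ and $w\ge 0$ you obtain only qualitative nonpositivity of the concave contribution, with no lower bound on its magnitude, while the convex tail is bounded above by the strictly positive quantity $\frac14\lvert g'(x_0)\rvert$. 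Because $w$ vanishes at every integer, sign information alone does not preclude the negative mass of $g''$ from being concentrated where $w$ is negligible, in which case your two bounds sum to something positive and the argument stalls. Closing it would require quantitative localization of $g''$ relative to the kernel $w$ (uniformly in $d$ and $q$), and you would also have to prove, not merely assert, that $g''$ changes sign exactly once on $[0,\infty)$ for every $d$.

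To be fair, the paper's own proof does not fill this hole either: it declares the summand unimodal, bounds the sum-versus-integral discrepancy only by the integral over the peak interval, and then appeals to a figure for the claim that the integral overestimates the sum. Your formulation is the more honest one --- it isolates precisely what must be shown and correctly explains the observed defect of roughly $1/12$, since the mean value of $w$ is $1/6$ and $\int_0^\infty g''\,dt=g'(\infty)-g'(0)=-1$, so $\frac12\int_0^\infty g''w\,dt\approx-\frac{1}{12}$. But as submitted your proposal is a correct reduction plus an open estimate at the only step that carries the content of the lemma, so it is not yet a proof.
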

\begin{proof}
  We will approximate $\sum _{l=1}^\infty l (1-(1-q^l)^d)$ with
  $\int_0^\infty x(1-(1-q^x)^d)dx$. Note that $x(1-(1-q^x)^d)$ has a
  single local maximum in the interval $[0,\infty)$. 
 If the local maximum is with the
   interval $[y,y+1]$ within $y\in \mathbb{N}$ then the error is
 \[ \int_y^{y+1} x(1-(1-q^x)^d)dx.\] This leads to a
  small overestimation of the sum as Fig.~\ref{fig:Var_app} shows.
  \begin{align*}
    \sum _{l=1}^\infty l (1-(1-q^l)^d) & \le
    \int_0^\infty x(1-(1-q^x)^d)dx \\&= \frac{-1}{(\ln q)^2}\int_0^1 \ln(1-u)\frac{(1-u^d)}{1-u}du &\\
    & = \frac{-1}{(\ln q)^2}\sum_{i=0}^{d-1} \int_0^1 \ln(1-u)u^idu\\
    & = \frac{1}{(\ln q)^2}\sum_{i=1}^{d} \frac{H_i}{i} &
  \end{align*}
  The first equation uses the substitution $u = 1-q^x$. The final
  result is based on the following identity \[\int_0^1 \ln(1-u)u^idu=
  \frac{-H_{i+1}}{i+1}.\] 
\end{proof}

We evaluated the results of Lemma~\ref{lem:cor_mess} by modeling the
behavior of this fault situation as a Markov chain and computed
$E[X_d]$ and $Var[X_d]$ using Theorem 3.3.5 from
\cite{Kemmeny:1976}. These computations showed that $\frac{1}{\ln 2}
H_{d}+1/2$ matches very well with $E[X_d]$ and that $E[X_d]\approx
2\log d$ (see Fig.~\ref{fig:E_log}). Furthermore, the gap between
$Var[X_d]$ and the bound given in Lemma~\ref{lem:cor_mess} is less
than $0.2$ (see Fig.~\ref{fig:Var_app}).

\begin{figure}[H]
  \centering \subfigure[Comparison of comput\-ed value of $E{[X_d]}$
  with ${\log d}$
  (Lemma~\ref{lem:cor_mess}).]{\label{fig:E_log}\includegraphics[width=0.32\textwidth]{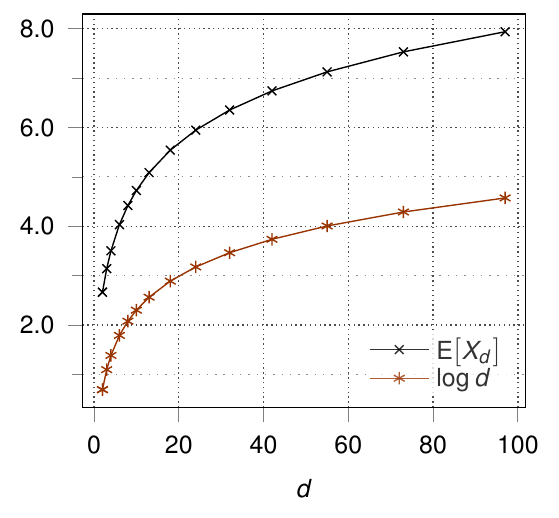}}\hfill
  \subfigure[Comparison of comput\-ed value of $Var{[X_d]}$ with
  approximation
  (Lemma~\ref{lem:cor_mess}).]{\label{fig:Var_app}\includegraphics[width=0.32\textwidth]{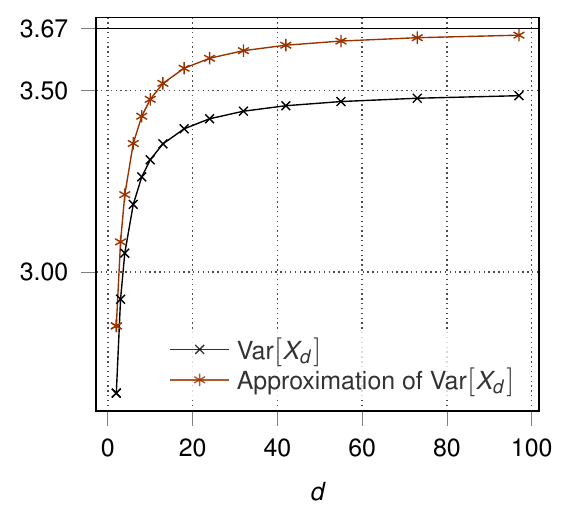}}\hfill
  \subfigure[$E{[A_d]}$ and $Var{[A_d]}$ from
  Lemma~\ref{lem:bound5}.]{\label{fig:E_Var}\includegraphics[width=0.30\textwidth]{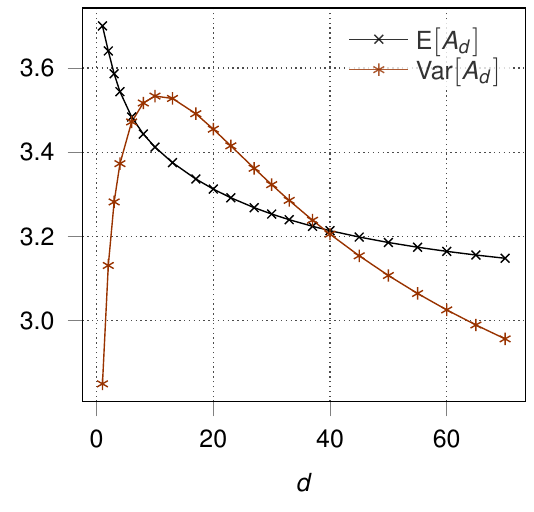}}
  \caption{Comparisons of computed with approximated values from
    Lemma~\ref{lem:cor_mess} and \ref{lem:bound5}.}
\end{figure}

\subsection{Memory Corruption}
This section considers the case that the memory of a single node $v$
is corrupted. First consider the case that the fault causes variable
$v.final$ to change to $false$. If $v.c$ does not change, then a
legitimate configuration is reached after one round. So assume $v.c$
also changes. Then the fault will not affect other nodes. This is
because no $w\in N(v)$ will change its value of $w.c$ because
$w.final=true$ and $v.final=false$. Thus, with probability at least
$1/2$ node $v$ will choose in the next round a color different form
the colors of all neighbors and terminate one round later. Similar to
$X_d$ let random variable $Z_d$ denote the number of rounds until a
legal coloring is reached ($d=|N_{conf}(v)|$). It is easy to verify
that $E[Z_d]=3$ in this case.

The more interesting case is that only variable $v.c$ is affected
(i.e. $v.final$ remains $true$). Let $c_f$ the corrupted value of
$v.c$ and suppose that $N_{conf}(v) = \{w\in N(v)\mid w.c=c_f\}\not=
\emptyset$. A node not contained in $S= N_{conf}(v) \cup \{v\}$ will
not change its state (c.f.\ Lemma~\ref{lem:contain}). Thus, the
contamination radius is $1$ and at most $\delta_i(v)+1$ nodes change
their state. Let $d=|N_{conf}(v)|$. The subgraph $G_S$ induced by $S$
is a star graph with $d+1$ nodes and center $v$.

\begin{lemma}
  To find a lower bound for $E[Z_d]$ we may assume
  that $w$ can choose a color from $\{0,1\}\backslash tabu$ with
  $tabu=\emptyset$ if $v.final=false$ and $tabu = \{v.c\}$ otherwise
  and $v$ can choose a color from $\{0,1,\ldots,d\}\backslash tabu$
  with $tabu \subseteq \{0,1\}$.
\end{lemma}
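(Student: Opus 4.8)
The plan is to reduce the full recovery dynamics to the minimal star instance encoded by the stated palettes, and to certify that this instance yields a valid lower bound for $E[Z_d]$. First I would apply Lemma~\ref{lem:contain}: since the contamination radius is $1$, only $v$ and the $d$ nodes of $N_{conf}(v)$ ever change state, every other node continuing to broadcast its fixed final colour. The process therefore lives entirely on the star $G_S$, and a leaf's external neighbours contribute nothing but their constant colours to that leaf's $tabu$ set; they never open a new conflict. This already strips the situation down to the centre $v$ and its $d$ conflicting leaves.

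The engine of the lower bound is that each leaf can make progress at rate at most $1/2$ per round. Immediately after the fault every $w\in N_{conf}(v)$ has $w.c=c_f=v.c$, so $w$ must at some point redraw a value $\neq\bot$ to leave the conflict, and by the definition of \emph{randomColor} a draw is non-$\bot$ with probability exactly $1/2$. Hence the first round in which $w$ obtains a keepable colour is an independent geometric variable with parameter $1/2$, and since a legal colouring requires every leaf to have redrawn, $Z_d$ dominates the maximum of these $d$ variables. This gives the clean inequality $E[Z_d]\ge E[\max_{1\le i\le d}\mathrm{Geom}(1/2)]$, and the maximum on the right is precisely the random quantity the simplified model is set up to evaluate.

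To see why one \emph{may assume} the stated palettes, note that $v$ has degree at least $d$, so its palette contains $\{0,\dots,d\}$, while every committed leaf carries a colour in $\{0,1\}$, whence $v.tabu\subseteq\{0,1\}$; the two leaf cases $tabu=\emptyset$ and $tabu=\{v.c\}$ merely record whether $v$ is currently committed. With palette $\{0,\dots,d\}$ the centre can settle on a colour in $\{2,\dots,d\}$, which lies outside every leaf's palette $\{0,1\}$, so once $v$ has settled each leaf's first non-$\bot$ draw is automatically conflict-free and final: one success per leaf then suffices, and the instance's recovery time coincides with the maximum of the $d$ geometric variables above. As each leaf has degree at least $1$, the palette $\{0,1\}$ is the smallest admissible one, so this minimal star is a genuine configuration that attains the bound.

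The step I expect to be the main obstacle is not the inequality itself—which follows because each leaf needs at least one independent non-$\bot$ draw—but verifying that the simplified instance \emph{attains} it, so that the model is the right vehicle for a \emph{tight} lower bound. Concretely, one must control the rounds before $v$ commits, during which a leaf's draw may collide with the still-fluctuating $v.c$ and be wasted. I would handle this by a coupling that credits each leaf only with its first non-$\bot$ draw and observes that any further wasted draws can only enlarge $Z_d$; combined with the fact that the full palette $\{0,\dots,d\}$ lets $v$ reach a colour in $\{2,\dots,d\}$ quickly, these wasted draws contribute only lower-order terms, so the rate-$1/2$ ceiling remains the binding constraint and the minimal-star model never overestimates the true containment time.
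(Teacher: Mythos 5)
There is a genuine gap: you never prove the actual content of the lemma, which is a worst-case reduction on the color palettes. The paper's proof is a one-round extremality computation: if $w$ and $v$ redraw in the same round, their choices coincide with probability $\frac{|C_w\cap C_v|}{|C_w|\,|C_v|}$, and this is maximized exactly when $C_w\subseteq C_v$ and both sets are as small as possible; realizing that extremal situation (neighbors of $w$ other than $v$ occupying all colors except $0,1$, etc.) is what forces the leaf palette $\{0,1\}\backslash tabu$ and the center palette $\{0,1,\ldots,d\}\backslash tabu$ with $tabu\subseteq\{0,1\}$. Your proposal contains no comparison between an arbitrary $1$-faulty configuration and the restricted model, so the "we may assume" is never justified. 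Worse, your derivation of the palettes is circular: the claim that "every committed leaf carries a colour in $\{0,1\}$" is not a consequence of degrees --- in a general configuration a leaf $w\in N_{conf}(v)$ redraws from $\{0,\ldots,\delta(w)\}\backslash w.tabu$ and may commit to any color there; that the relevant colors lie in $\{0,1\}$ is precisely the worst-case assumption the lemma asserts, not a fact you can read off the star structure.

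What you prove instead, $E[Z_d]\ge E[\max_{1\le i\le d}\mathrm{Geom}(1/2)]$, is a different statement, and both halves of your argument around it fail. The "attainment" step is false as stated: in the restricted model, while $v$ is uncommitted a leaf's first non-$\bot$ draw can collide with $v$'s simultaneous draw (e.g.\ both pick $0$) and is then wasted, so the recovery time strictly exceeds the max of the geometric variables; these collisions are not "lower-order terms" --- they are exactly what the transitions $C^{i}\longrightarrow C^{j}$ of the Markov chain ${\cal M}$ in Lemma~\ref{lem:trans_prob} encode, and the paper spends Lemma~\ref{lem:bound5} (showing $E[A_d]<5$) to control them. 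And your concluding direction, that the minimal-star model "never overestimates the true containment time," is the opposite of what is needed downstream: the reduction feeds into Lemma~\ref{lem:exp_memory} to \emph{upper}-bound the expected containment time after a memory corruption, which requires that no admissible configuration recovers more slowly than the restricted model --- i.e.\ extremality via the collision-probability maximization above. (The lemma's wording "lower bound" invites your reading, but the paper's proof and every use of the lemma make the worst-case role unambiguous.) Without the maximization of $\frac{|C_w\cap C_v|}{|C_w|\,|C_v|}$ over palette choices, the proposal does not establish the lemma.
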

\begin{proof}
  When a node $u\in S$ chooses a color with function {\tt randomColor}
  the color is randomly selected form $C_u= \{0,1,\ldots,
  \delta(v)\}\backslash tabu$. Thus, if $w$ and $v$ choose colors in
  the same round, the probability that the chosen colors coincide
  is \[\frac{|C_w\cap C_v|}{|C_w||C_v|}.\] This value is maximal if
  $|C_w\cap C_v|$ is maximal and $|C_w||C_v|$ is minimal. This is
  achieved when $C_w\subseteq C_v$ and $C_v$ is minimal (independent
  of the size of $C_w$) or vice versa. Thus, without loss of
  generality we can assume that $C_w\subseteq C_v$ and both sets are
  minimal. Thus, for $w\in N_{conf}(v)$ the nodes in $N(w)\backslash
  \{v\}$ already use all colors from $\{0,1,\ldots,\delta(v)\}$ but
  $0$ and $1$ and all nodes in $N(v)\backslash N_{conf}(v)$ already
  use all colors from $\{0,1,\ldots,\delta(v)\}$ but $0,1,\ldots, k$.
  Hence, a node $w\in N_{conf}(v)$ can choose a color from
  $\{0,1\}\backslash tabu$ with $tabu=\emptyset$ if $v.final=false$
  and $tabu = \{v.c\}$ otherwise. Furthermore, $v$ can choose a color
  from $\{0,1,\ldots,k\}\backslash tabu$ with $tabu \subseteq
  \{0,1\}$. In this case $tabu=\emptyset$ if $w.final=false$
  for all $w\in N_{conf}(v)$.
\end{proof}


Thus, in order to bound the expected number of rounds to reach a
legitimate state after a memory corruption we can assume that $G=G_S$
and $u.final=true$ and $w.c=0$ (i.e. $c_f=0$) for all $u\in S$. After
one round $u.final=false$ for all $u\in S$. To compute the expected
number of rounds to reach a legitimate state an execution of the
algorithm for the graph $G_s$ is modeled by a Markov chain ${\cal M}$
with the following states ($I$ is the initial state).
\begin{description}[style=multiline,leftmargin=7mm,font=\normalfont]
\item[$I$:] Represents the faulty state with $u.c=0$ and
  $u.final=true$ for all $u\in S$.
\item[$C^i$:] Node $v$ and exactly $d-i$ non-center nodes will
  not be in a legitimate state after the following round ($0\le i \le
  d$). In particular $v.final=false$ and $w.c =v.c\not=\bot$ or
  $v.c=w.c = \bot$ for exactly $d-i$ non-center nodes $w$.
    \item[$P$:] Node $v$ has not reached a legitimate state but
      will do so in the next round. In particular $v.final=false$
      and $v.c\not= w.c$ for all non-center nodes $w$.
    \item[$F$:] Node $v$ is in a legitimate state, i.e.
      $v.final=true$ and $v.c\not= w.c$ for all non-center nodes $w$,
      but $w.c$ may be equal to $\bot$.
\end{description}

${\cal M}$ is an absorbing chain with $F$
being the single absorbing state. Note that when the system is in
state $F$, then it is not necessarily in a legitimate state. This
state reflects the set of configurations considered in the last
section. 
\begin{lemma}\label{lem:trans_prob}
The transition probabilities of ${\cal M}$ are as follows:

\begin{description}[style=multiline,leftmargin=20mm,font=\normalfont]
\item[$I\longrightarrow  P$:] $\frac{d-1}{2d} + \frac{1}{d}\left(\frac{1}{2}\right)^{d+1}$
\item[$I\longrightarrow C^0$: ]
    $\frac{d-1}{d}\left(\frac{1}{2}\right)^{d+1} + \frac{1}{2d}$
    \item[$I\longrightarrow
      C^{j}$:]$\binom{d}{d-j}\left(\frac{1}{2}\right)^{d+1}$ ($0 < j
      \le d$)
    \item[$C^{i}\longrightarrow
      C^{j}$:]$\binom{d-i}{d-j}\left(\frac{1}{2}\right)^{d-i+1} +
      \frac{1}{d-i+1} \binom{d-i}{j-i}\left(\frac{1}{4}\right)^{d-i}
      (3^{d-j}-2^{d-j})$ ($0 \le i \le j \le d$)
    \item[$C^{i}\longrightarrow
      P$:]$\frac{1}{d-i+1}\left(\frac{3}{4}\right)^{d-i} +
      \frac{d-i-1}{2(d-i+1)}$ ($0 \le i < d$)
\item[$C^{d}\longrightarrow P$:] $1/2$
    \item[$P\longrightarrow F$:]$1$
\end{description}
\end{lemma}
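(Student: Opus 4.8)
My overall plan is to read off each entry by simulating one synchronous round of \AC on the reduced star $G_S$ and conditioning on the independent draws of the routine \texttt{randomColor}: every non-\emph{final} node returns $\bot$ with probability $1/2$ and otherwise a uniformly random color of its current palette. By the reduction established in the preceding lemma I may assume that in a state with $m:=d-i$ unresolved (``active'') leaves the center $v$ draws from a palette of $m+1$ colors — the $i$ already-resolved leaves having permanently placed their colors into $v.tabu$ — of which only the two ``low'' colors can be hit by a leaf, while each active leaf draws $\bot$ (prob.\ $1/2$) or one of these two colors (prob.\ $1/4$ each). After the round I classify how many leaves now carry a real color different from $v.c$ (these terminate one round later by Lemma~\ref{lem:terminate}) and whether $v$ itself has become collision-free; Lemma~\ref{lem:contain} guarantees that nothing outside $S$ interferes, so this local bookkeeping is exhaustive.

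I would first dispose of the row of $I$. This round is mildly asymmetric because every node broadcast the single corrupted color: each leaf is forced to redraw from one fresh color (hence is that color or $\bot$, each with prob.\ $1/2$) and $v$ redraws from $\{1,\dots,d\}$, exactly one of which a leaf can match. Conditioning on $v$'s draw gives three cases. If $v$ hits one of the $d-1$ unreachable colors it is at once collision-free, so the state is $P$; if $v$ draws the unique shared color it reaches $P$ exactly when no leaf hit it and otherwise leaves \emph{all} $d$ leaves unresolved, i.e.\ $C^0$; and if $v$ draws $\bot$ the number of still-matching leaves is $\mathrm{Binomial}(d,1/2)$, which contributes the clean binomial $\binom{d}{d-j}(1/2)^{d+1}$. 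Adding the contributions reproduces the three stated values for $I\!\to\!P$, $I\!\to\!C^0$ and $I\!\to\!C^j$.

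For a generic $C^i$ (again $m=d-i$) the two summands of the formula correspond to the two ways $v$ can \emph{stay} active. If $v$ draws $\bot$ (prob.\ $1/2$) every active leaf resolves precisely when it draws a real color (prob.\ $1/2$, since any real color differs from $v$'s $\bot$), so the number resolving is $\mathrm{Binomial}(m,1/2)$ and this branch gives $\binom{d-i}{d-j}(1/2)^{d-i+1}$. If $v$ draws one of the two shared colors (prob.\ $\tfrac{1}{2(m+1)}$ each) then a leaf resolves by drawing the \emph{other} shared color (prob.\ $1/4$) and otherwise stays, either through $\bot$ (prob.\ $1/2$) or by matching $v$ (prob.\ $1/4$); requiring that $v$ remain active forces at least one of the $d-j$ staying leaves to match, and an inclusion--exclusion over this event converts the per-leaf stay weight — contributed either by $\bot$ or by a match — into the factor $3^{d-j}-2^{d-j}$, giving the second summand $\tfrac{1}{d-i+1}\binom{d-i}{j-i}(1/4)^{d-i}(3^{d-j}-2^{d-j})$. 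The row $C^i\!\to\!P$ then collects the complementary events in which $v$ becomes collision-free: drawing one of the $m-1$ unreachable colors ($\tfrac{m-1}{2(m+1)}$) or drawing a shared color that no active leaf happens to hit ($\tfrac{1}{m+1}(3/4)^m$). The degenerate state $C^d$ has $m=0$: $v$ merely chooses between its one remaining color and $\bot$, so $C^d\!\to\!P=C^d\!\to\!C^d=1/2$. Finally $P\!\to\!F=1$ is immediate, because in $P$ node $v$ already holds a real color distinct from every neighbor and hence sets $final:=true$ in the next round (Lemma~\ref{lem:terminate}).

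The part I expect to cost the most is not any individual sum but verifying that the partition $I,C^0,\dots,C^d,P,F$ really satisfies the lumping condition demanded in Section~\ref{sec:absorb-mark-chains}: all configurations collapsed into one state must share identical outgoing probabilities. This is exactly what pins down the worst-case palette convention used above — that $v$'s palette shrinks by one per resolved leaf and that leaves can collide only on two shared colors — and it is here that the preceding reduction is invoked to argue the convention yields a legitimately \emph{upper-bounding} chain rather than the literal dynamics. A second delicate point is the $\bot$ accounting: $\bot$ simultaneously keeps a node active and, carrying twice the weight of a color-match in quarter-unit terms, is precisely what promotes the naive $2^{\,d-j}$ count to the $3^{\,d-j}$ in the formula. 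The boundary states ($C^0$ against $P$, the degenerate $C^d$, and the asymmetric first round out of $I$) finally have to be checked individually, since each breaks the generic two-shared-color pattern.
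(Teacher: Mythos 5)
Your proposal is correct and follows essentially the same route as the paper's own proof: condition on $v$'s draw ($\bot$, one of the two shared low colors, or an unreachable color), count leaf outcomes binomially with weights $1/2$ for $\bot$ and $1/4$ per shared color, and use the at-least-one-collision complement to produce the factor $3^{d-j}-2^{d-j}$; your values agree with all stated entries, including the degenerate $C^d$ and the asymmetric first round out of $I$ where leaves draw from the single fresh color. If anything, your reconstruction of $I\longrightarrow C^j$ via the $v.c=\bot$ branch is cleaner than the paper's printed text for that case (which duplicates the $C^i\longrightarrow P$ analysis), and your explicit flagging of the lumping/worst-case-palette convention makes precise what the paper leaves implicit in the preceding reduction lemma.
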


\begin{proof}  We consider each case separately. \vspace*{-2mm}
\subsubsection*{ $I\longrightarrow P:$} Note that $u.final=true$ and $u.c=0$
  for all $u\in S$.  \\ Case 0: $v.c=\bot$. Impossible.\\
  Case 1: $v.c=0$. Impossible, since  non-center nodes
  have $c=0$ and $final=true$.\\
  Case 2: $v.c=1$. This happens with probability $\frac{1}{2d}$. All
  non-center nodes $w$ choose $w.c=\bot$, this happens with
  probability $\left(\frac{1}{2}\right)^d$.\\
  Case 3: $v.c>1$. This happens with probability $\frac{d-1}{2d}$.
  Non-center nodes can make any choice. This gives the total
  probability for this transition as
$\frac{d-1}{2d} + \frac{1}{d}\left(\frac{1}{2}\right)^{d+1}.$
\vspace*{-3mm}
\subsubsection*{$I\longrightarrow C^0:$} Note that $u.final=true$ and $u.c=0$
  for all $u\in S$.\\ Case 0: $v.c=\bot$. Non-center nodes
  choose $c=\bot$. Case has probability $\left(\frac{1}{2}\right)^{d+1}$\\
  Case 1: $v.c=0$. Impossible (see transition $I\longrightarrow P$).\\
  Case 2: $v.c=1$. At least one non-center nodes $w$ choose $w.c=1$,
  all others choose $w.c=\bot$. This
  case has probability $\frac{1}{2d}\sum_{l=1}^d
  \binom{d}{l}\left(\frac{1}{2}\right)^{d} =
  \frac{1}{2d}\left(\frac{1}{2}\right)^{d} (2^d-1)$\\
  Case 3: $v.c>1$. This case is impossible.
\vspace*{-3mm}
\subsubsection*{$I\longrightarrow C^{j}:$} Note that $u.final=true$ for all $u\in
    S$.\\ Case 1: $v.c=0$. This happens with probability
    $1/2(d-i+1)$. None of the $d-i$ non-center nodes $w$ sets
    $w.c=0$, this has probability $\left(\frac{3}{4}\right)^{d-i}$\\
    Case 2: $v.c=1$. Similar to case 1.\\
    Case 3: $v.c > 1$. (requires $d-i> 1$). This happens with
    probability $\frac{d-i-1}{2(d-i+1)}$. Non-center nodes can make any
    choice.
\vspace*{-3mm}
\subsubsection*{$C^i\longrightarrow P:$} Note that $u.final=false$ for
all $u\in S$.\\ Case 1: $v.c=0$. This happens with probability
$1/2(d-i+1)$. None of the $d-i$ non-center nodes $w$ sets
$w.c=0$, this has probability $\left(\frac{3}{4}\right)^{d-i}$\\
Case 2: $v.c=1$. Similar to case 1.\\
Case 3: $v.c > 1$. This happens with probability
$\frac{d-i-1}{2(d-i+1)}$. Note $d>i$. Non-center nodes can make any
choice.
\vspace*{-3mm}
\subsubsection*{$C^{i}\longrightarrow C^{j}:$} Note that $u.final=false$ for
    all $u\in S$.\\ Case 1: $v.c=\bot$. This happens with probability
    $\frac{1}{2}$. $d-j$ non-center nodes choose $c=\bot$ (with
    probability $\left(\frac{1}{2}\right)^{d-j}$), the other $j-i$
    non-center nodes choose $c\not=\bot$ (with probability
    $\left(\frac{1}{2}\right)^{j-i}$). The total probability for this
    case is $\binom{d-i}{d-j}\left(\frac{1}{2}\right)^{d-i+1}$.\\
    Case 2: $v.c=0$. This happens with probability
    $\frac{1}{2(d-i+1)}$. Exactly $j-i$ non-center nodes choose
    $c=1$ (with probability $\left(\frac{1}{4}\right)^{j-i}$), $1\le
    l\le d-j$ non-center nodes  choose $c=0$ (with probability
    $\left(\frac{1}{4}\right)^{l}$) and all other non-center nodes
    choose $c=\bot$ (with probability
    $\left(\frac{1}{2}\right)^{d-j-l}$). The total probability for
    this case is
\begin{eqnarray*}
     \frac{1}{2(d-i+1)}\binom{d-i}{j-i}\left(\frac{1}{4}\right)^{j-i}\sum_{l=1}^{d-j}
     \binom{d-j}{l}\left(\frac{1}{4}\right)^{l}\left(\frac{1}{2}\right)^{d-j-l}&
     =
     \\\frac{1}{2(d-i+1)}\binom{d-i}{j-i}\left(\frac{1}{4}\right)^{j-i}\sum_{l=1}^{d-j}
     \binom{d-j}{l}\left(\frac{1}{2}\right)^{d-j+l}& =
     \\\frac{1}{2(d-i+1)}\binom{d-i}{j-i}\left(\frac{1}{4}\right)^{j-i}\left(\frac{1}{2}\right)^{d-j}\sum_{l=1}^{d-j}
     \binom{d-j}{l}\left(\frac{1}{2}\right)^{l}& =
\end{eqnarray*}
\begin{eqnarray*}
     \frac{1}{2(d-i+1)}\binom{d-i}{j-i}\left(\frac{1}{2}\right)^{d+j-2i}\sum_{l=1}^{d-j}
     \binom{d-j}{l}\left(\frac{1}{2}\right)^{l}& =
     \\\frac{1}{2(d-i+1)}\binom{d-i}{j-i}\left(\frac{1}{2}\right)^{d+j-2i}
      \left(\left(\frac{3}{2}\right)^{d-j} -1\right)& =   \\\frac{1}{2(d-i+1)}\binom{d-i}{j-i}\left(\frac{1}{4}\right)^{d-i}
      \left(3^{d-j} - 2^{d-j}\right)
  \end{eqnarray*}
    Case 2: $v.c=1$. Similar to Case 1.\\
    Case 3: $v.c>0$. This does not lead to $C^{j}$ but to $P$.
\end{proof}

We first calculate the expected number $E[A_d]$ of rounds to reach the
absorbing state $F$. With Lemma~\ref{lem:cor_mess} this will enable us
to compute the expected number $E[Z_d]$ of rounds required to reach a
legitimate system state. To build the transition matrix $P$ of ${\cal
  M}$ the $d+4$ states are ordered as \[I,C^0,C^1,\ldots,C^d,P,F\] Let
$Q$ be the $(d+3) \times (d+3)$ upper left submatrix of $P$. For
$s=-1,0,1,\ldots, d+1$ denote by $Q_s$ the $(s+2) \times (s+2)$ lower
right submatrix of $Q$, i.e. $Q = Q_{d+1}$. Denote by $N_s$ the
fundamental matrix of $Q_s$ (notation as introduced in
section~\ref{sec:absorb-mark-chains}). Let $1_s$ be the column vector
of length $(s+2)$ whose entries are all $1$ and $\epsilon_s = N_s
1_s$. For $s=0,\ldots,d$ is $\epsilon_s$ the expected number of rounds
to reach state $F$ from state $C^{s}$ and $\epsilon_{d+1}$ is the
expected number of rounds to reach state $F$ from $I$, i.e.
$\epsilon_{d+1}=E[A_d]$ (Theorem 3.3.5, \cite{Kemmeny:1976}).

\begin{lemma}\label{lem:bound5}
  The expected number $E[A_d]$ of rounds to reach $F$ from $I$ is less than
  $5$ and the variance is less than $3.6$.
\end{lemma}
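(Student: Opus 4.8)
The plan is to exploit the triangular structure of $\mathcal{M}$: by Lemma~\ref{lem:trans_prob} the only transitions out of $C^i$ go to $C^j$ with $j\ge i$, to $P$, or (via $P$) to $F$. This is exactly the structure encoded by the nested lower-right submatrices $Q_s$, and it lets me replace the matrix inversion $N=(E_d-Q)^{-1}$ by a scalar backward recursion for the expected hitting times. Writing $a_X$ for the expected number of rounds to reach $F$ from state $X$, I have $a_P=1$ and, for the center states, $a_{C^i}=1+p_{C^i,C^i}a_{C^i}+\sum_{j>i}p_{C^i,C^j}a_{C^j}+p_{C^i,P}$. The crucial simplification is to reindex the center states by the number $k=d-i$ of remaining conflicts: inspecting the formulas of Lemma~\ref{lem:trans_prob} shows that the transition probabilities among the $C$-states and into $P$ depend only on $k$ (the factors of $d$ cancel, e.g.\ $\binom{d-i}{j-i}=\binom{k}{k-m}=\binom{k}{m}$ with $m=d-j$), so that $b_k\coloneqq a_{C^{d-k}}$ obeys a recursion independent of $d$, with base case $b_0=a_{C^d}=(1+\tfrac12)/(1-\tfrac12)=3$.

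The heart of the argument is a uniform bound $b_k\le B$ for all $k$, proved by induction on $k$. Solving $b_k=\bigl(1+p(k\to P)+\sum_{m<k}p(k\to m)\,b_m\bigr)/(1-p(k\to k))$ and substituting $b_m\le B$ for $m<k$ gives $b_k\le\bigl(1+p(k\to P)+B\sum_{m<k}p(k\to m)\bigr)/(1-p(k\to k))$; since $1-p(k\to k)=p(k\to P)+\sum_{m<k}p(k\to m)$, this is at most $B$ exactly when $B\ge 1+1/p(k\to P)$. Thus it suffices to bound the escape probability into $P$ from below, uniformly in $k$. From Lemma~\ref{lem:trans_prob}, $p(k\to P)=\frac{1}{k+1}\bigl(\tfrac34\bigr)^{k}+\frac{k-1}{2(k+1)}$, which tends to $\tfrac12$ as $k\to\infty$ and whose minimum over $k\ge 0$ (attained near $k=2$) is a constant strictly above $\tfrac13$. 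Establishing $p(k\to P)>\tfrac13$ for all $k$ --- a short finite check for $k\le 5$ together with the elementary estimate $\tfrac{k-1}{2(k+1)}>\tfrac13$ for $k\ge 6$ --- is the main obstacle; once it is in hand, $B=4$ closes the induction and yields $b_k<4$ for every $k$ (the base case $b_0=3<4$ is consistent).

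With the center states controlled, $E[A_d]=a_I$ follows from a single step out of $I$. The transitions out of $I$ are the only ones that genuinely depend on $d$, but by Lemma~\ref{lem:trans_prob} they reach $P$ with probability $p_{I,P}=\frac{d-1}{2d}+\frac1d(\tfrac12)^{d+1}\ge\tfrac14$ and otherwise land in some $C^j$. Hence $a_I=1+p_{I,P}+\sum_j p_{I,C^j}\,b_{d-j}\le 1+p_{I,P}+B(1-p_{I,P})=1+B+p_{I,P}(1-B)$, which, being decreasing in $p_{I,P}$, is maximized at $p_{I,P}=\tfrac14$ and is therefore at most $\tfrac54+\tfrac34 B=\tfrac{17}{4}<5$. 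This proves the bound on the mean.

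For the variance I would run the analogous second-moment recursion --- equivalently, evaluate the vector $(2N-E_d)a-a_{sq}$ of Section~\ref{sec:absorb-mark-chains} along the same nested submatrices $Q_s$. Writing $v_k$ for the per-state variance (or for $E[T_k^2]$ with $T_k$ the time from a $k$-conflict state), the triangular structure again yields a forward recursion in $k$, and the very same uniform escape bound $p(k\to P)>\tfrac13$ guarantees that the second moments stay uniformly bounded; propagating the explicit constants through the recursion, and finally through the single step out of $I$ exactly as above, gives the stated bound below $3.6$. I expect the book-keeping for the second moment --- in particular keeping the cross terms between the holding time at $C^{d-k}$ and the remaining time under control --- to be the only additional difficulty beyond the escape-probability estimate that already drives the mean.
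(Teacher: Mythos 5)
Your analysis of the mean is correct and is, in substance, the paper's own argument in different notation: your reindexing by $k=d-i$ is precisely the paper's observation that the nested submatrices $Q_s$ do not depend on $d$; your closure condition $B\ge 1+1/p(k\to P)$ combined with the escape estimate $p(k\to P)>\tfrac13$ (minimum $\tfrac{17}{48}$ at $k=2$) is the paper's induction $\epsilon_i\le 4$, whose engine is the same inequality in the form $3a_{i+2}>1$; and your final step out of $I$ matches the paper's, except that by exploiting $p_{I,P}\ge\tfrac14$ you sharpen its conclusion $\epsilon_{d+1}\le 1+4\sum_l a_l=5$ to $\tfrac{17}{4}$. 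All of your deferred finite checks are correct as stated, so the first half of the lemma is fully proved by your argument.

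The variance half, however, is a genuine gap. Note first that the paper itself does not obtain $3.6$ analytically: it writes down the exact expression $Var[A_d]=((2N_{d+1}-E_{d+1})1_{d+1}-1_{d+1}^2)[1]$ and verifies the bound by explicit computation, citing Fig.~\ref{fig:E_Var}. Your plan --- run the second-moment recursion and ``propagate the explicit constants'' using the same escape bound --- cannot close. Carrying out your own scheme: with $v_k=E[T_k^2]$ the triangular structure gives $v_k\,(1-p(k\to k)) = 2b_k-1+p(k\to P)+\sum_{m<k}p(k\to m)\,v_m$, and substituting the uniform bounds $b_k\le 4$ and $v_m\le V$ yields the closure condition $V\ge 1+7/p(k\to P)$, i.e.\ $V\approx 22$ when all you know is $p(k\to P)>\tfrac13$. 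That bounds the variance from a $C$-state only by $V-b_k^2\le 18$, and the law of total variance over the single step out of $I$ leaves you in the same range --- roughly a factor of five above the target. The constant $3.6$ is nearly tight (the computed values in Fig.~\ref{fig:E_Var} come within about $0.1$ of it), so \emph{any} argument that replaces the exact $b_m$ and $v_m$ by uniform constants must overshoot; getting below $3.6$ requires solving the first- and second-moment recursions essentially exactly, which is what the paper's fundamental-matrix computation amounts to. For the variance you should therefore either reproduce that (numerical) evaluation or produce a genuinely exact analytic treatment of the second moment; the uniform-constant propagation you sketch, driven only by the escape-probability estimate, cannot deliver the stated constant.
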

\begin{proof}
Note that $Q_s$ and $N_s$ are upper triangle matrices.  Let 
\[E_i - Q_i = \left(
\begin{array}{cccc}
  1-a_{1} & -a_{2} & \ldots & -a_{i+2} \\  \\[-9pt]
  0 & \multicolumn{3}{c}{\multirow{3}{*}{\Large $E_{i-1}-Q_{i-1}$}} \\[-4pt]
  \vdots & \\
  0 &
\end{array}
\right) \hspace*{9mm}
N_i = \left(
\begin{array}{cccc}
  x_{1} & x_{2} & \ldots & x_{i+2} \\  \\[-9pt]
  0 & \multicolumn{3}{c}{\multirow{3}{*}{\Large $N_{i-1}$}} \\[-4pt]
  \vdots & \\
  0 &
\end{array}\right)\]

\noindent
$E_i = (E_i-Q_i)N_i$ gives rise to $(i+2)^2$ equations. Summing up
the $i+2$ equations for the first row of $E_i$ results in
\begin{eqnarray}\label{eq:main}
\epsilon_{i} = (1-a_1)^{-1}\left(1 + \sum_{l=2}^{i+2}
  a_l\epsilon_{i+1-l}\right)
\end{eqnarray}
It is straightforward to
verify that $\epsilon_{-1} = 1$ and $\epsilon_{0} = 3$. Hence
\[ \epsilon_{i} = (1-a_1)^{-1}\left(1 + \sum_{l=2}^{i}
  a_l\epsilon_{i+1-l} + 3a_{i+1} + a_{i+2}\right)\]
Next we show by induction on $i$ that $\epsilon_i\le 4$ for
$i=-1,0,1,\ldots,d$. So assume that $\epsilon_l\le 4$ for
$l=-1,0,1,\ldots, i-1$ with $i < d$. Then
\[ \epsilon_{i} \le (1-a_1)^{-1}\left(1 + 4\sum_{l=2}^{i}
  a_l + 3a_{i+1} + a_{i+2}\right)\]
since $a_i\ge 0$. Using the fact $1-a_1 = \sum_{l=2}^{i+2}a_l$ this inequality becomes
\begin{eqnarray*}
\epsilon_{i} &\le& (1-a_1)^{-1}(1 + 4 (1-a_1-a_{i+1}-a_{i+2}) +
3a_{i+1} + a_{i+2})
 =  4 + \frac{1 -a_{i+1}-3a_{i+2}}{1-a_1}
\end{eqnarray*}
Coefficient $a_j$ denotes the transition probability from $C^{d-i}$ to
$C^{d+j-(i+1)}$ for $j=1,\ldots,i+1$ and $a_{i+2}$ that for changing
from $C^{d-i}$ to $P$. For $i\le d$ the following values from
Lemma~\ref{lem:trans_prob} are used:
\begin{eqnarray*}
a_{1} &=& \left(\binom{i}{i+1-l}\left(\frac{1}{2}\right)^{i+1} +
      \frac{1}{i+1} \binom{i}{l-1}\left(\frac{1}{4}\right)^{i}
      (3^{i+1-l}-2^{i+1-l})\right)\\
a_{i+1}& = & \left(\frac{1}{2}\right)^{i+1} \text{ and }
a_{i+2} =  \frac{1}{i+1}\left(\frac{3}{4}\right)^{i} +
      \frac{i-1}{2(i+1)}. \text{ ~ Thus,}
\end{eqnarray*}
 \[3a_{i+2} =
\frac{3}{i+1}\left(\frac{3}{4}\right)^{i} + \frac{3(i-1)}{2(i+1)} >
1\] holds for $i\ge 2$. This yields \[\frac{1
  -a_{i+1}-3a_{i+2}}{1-a_1} < 0\] and therefore $\epsilon_i \le 4$.
To bound $\epsilon_{d+1}$ we use Equation~\ref{eq:main}
with $i=d+1$. Note that in this case $a_1=0$ since a transition from
$I$ to itself is impossible. Hence
\begin{eqnarray*}
  E[A_d]= \epsilon_{d+1} = 1 + \sum_{l=2}^{d+3}  a_l\epsilon_{d+2-l} \le 1 + 4
  \sum_{l=2}^{d+3}  a_l = 5
\end{eqnarray*}
Thus, $Var[A_d] = ((2N_{d+1} - E_{d+1})1_{d+1} - 1_{d+1}^2)[1] =
2\sum_{i=1}^{d+3} x_i\epsilon_{d+2-i} - \epsilon_{d+1} -
\epsilon_{d+1}^2$
Fig.~\ref{fig:E_Var} shows that $Var[A_d]\le 3.6$.
\end{proof}


\begin{lemma}\label{lem:exp_memory}
  The expected value for the containment time after a
  memory corruption at node $v$ is at most  $\frac{1}{\ln 2}
  H_{\delta_i(v)}+11/2$ with variance  less than $7.5$.
\end{lemma}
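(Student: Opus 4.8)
The plan is to decompose the containment time $Z_d$ (number of rounds until a legitimate state after a memory corruption of $v.c$) into the time $A_d$ to reach the absorbing state $F$ of the Markov chain $\mathcal{M}$, plus the residual time needed once the system sits in $F$. The key structural observation is that state $F$ of $\mathcal{M}$ is exactly the family of configurations analyzed in the message-corruption section: in $F$ the center $v$ is legitimate ($v.final=true$ and distinct from all neighbors), while the non-center nodes in $N_{conf}(v)$ may still be unsettled, which is precisely the situation governed by the random variable $X_d$ of Lemma~\ref{lem:cor_mess}. I would therefore argue that, conditioned on reaching $F$, the number of additional rounds to reach a fully legitimate coloring is stochastically dominated by $X_{d'}$ for some $d'\le \delta_i(v)$, so that $Z_d$ is bounded in distribution by $A_d + X_{\delta_i(v)}$.

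First I would make the additivity precise: since $F$ is the single absorbing state of $\mathcal{M}$ and reaching it takes expected time $E[A_d]<5$ by Lemma~\ref{lem:bound5}, and since from $F$ the remaining dynamics coincide with the message-corruption process bounded by $E[X_{\delta_i(v)}]\le \tfrac{1}{\ln 2}H_{\delta_i(v)}+\tfrac12$ from Lemma~\ref{lem:cor_mess}, linearity of expectation gives
\[
E[Z_d]\le E[A_d]+E[X_{\delta_i(v)}]\le 5+\frac{1}{\ln 2}H_{\delta_i(v)}+\frac12.
\]
This does not yet match the claimed bound $\tfrac{1}{\ln 2}H_{\delta_i(v)}+\tfrac{11}{2}$, so the additive constant must be tightened: $5+\tfrac12=\tfrac{11}{2}$ exactly, so in fact the two constants already combine correctly, and the harmonic term is inherited verbatim. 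The remaining task is to justify that the two phases can be glued without double-counting the round that transitions into $F$.

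For the variance I would invoke independence of the two phases. The time $A_d$ to absorption depends only on the random choices made before $v$ becomes legitimate, whereas the residual time $X_{\delta_i(v)}$ depends on the subsequent independent coin flips of the non-center nodes; hence $Z_d$ is (stochastically dominated by) a sum of independent variables and
\[
Var[Z_d]\le Var[A_d]+Var[X_{\delta_i(v)}]\le 3.6+\frac{\pi^2}{6\ln^2 2}+\frac14,
\]
using Lemma~\ref{lem:bound5} and Lemma~\ref{lem:var}. Here one must check that $3.6+3.6737\approx 7.27<7.5$, which gives the stated bound with a small margin.

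The main obstacle I anticipate is establishing the independence (or at least a clean stochastic-dominance coupling) between phase one and phase two. The two phases share the boundary round in which $v$ sets $v.final$ to $true$, and the configuration of the non-center nodes at that moment is not arbitrary — it is correlated with the history that produced absorption. The careful step is to argue that, regardless of how $F$ is entered, the conditional distribution of the non-center nodes' residual recovery time is dominated by the unconditional $X_{\delta_i(v)}$ of the message-corruption analysis, so that the additive-expectation and sum-of-variances bounds remain valid. Once this domination is secured, the rest is the arithmetic of combining the constants from Lemmas~\ref{lem:cor_mess}, \ref{lem:var}, and \ref{lem:bound5}.
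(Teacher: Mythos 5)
Your treatment of the expectation is essentially the paper's own argument. The paper also splits every execution from the faulty configuration $I$ at its first visit to a configuration $x\in F$, uses the Markov property to factor the path weights, and bounds the residual phase by $\max\{E(x,{\cal L})\mid x\in F\}$, which by Lemma~\ref{lem:cor_mess} is at most $\frac{1}{\ln 2}H_{\delta_i(v)}+\frac{1}{2}$; combined with $E(I,F)=E[A_d]< 5$ from Lemma~\ref{lem:bound5} this yields $\frac{1}{\ln 2}H_{\delta_i(v)}+\frac{11}{2}$. Note that this maximum-based bound needs only conditioning and linearity of expectation, not independence of the two phases, so the ``gluing'' difficulty you flag is genuinely harmless for the expected value: whatever the correlation between $A_d$ and the entry configuration, $E[Z_d]\le E[A_d]+\max_{x\in F}E(x,{\cal L})$ holds unconditionally.

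The genuine gap is your variance step. First, $A_d$ and the residual time $R$ are not independent: they are only conditionally independent given the entry configuration $x\in F$, and the conditional mean $m(x)=E(x,{\cal L})$ depends on how many non-center nodes are still unsettled at absorption, which is correlated with the history that produced $A_d$; hence $\operatorname{Cov}(A_d,R)=\operatorname{Cov}\bigl(A_d,m(x)\bigr)$ need not vanish, and $Var[Z_d]\le Var[A_d]+Var[X_{\delta_i(v)}]$ does not follow. Second---and this is what your closing sentence ``once this domination is secured, the rest is arithmetic'' overlooks---even a perfect coupling $Z_d\le A_d+X'$ with $A_d$ and $X'$ independent would not rescue the inequality, because variance is not monotone under stochastic dominance: $0\le Z\le W$ almost surely gives $E[Z^2]\le E[W^2]$, but $E[Z]^2$ may shrink as well, so $Var[Z]$ can exceed $Var[W]$. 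A generic Cauchy--Schwarz repair of the covariance is also too weak, giving only $Var[Z_d]\le\bigl(\sqrt{3.6}+\sqrt{3.68}\bigr)^2\approx 14.5$, well above $7.5$. The repair consistent with the paper's terse ``the bound on the variance is proved similarly'' is to redo the execution-splitting computation at the level of second moments: expand $E[Z_d^2]=E[A_d^2]+2E[A_d\, m(x)]+E[R^2]$, bound the cross term and the residual second moment by the maxima over $x\in F$ (using Lemma~\ref{lem:var} for the latter), and only then subtract $E[Z_d]^2$. The fact that $3.6+3.6737\approx 7.27$ sits about $0.23$ below the claimed $7.5$ suggests precisely that the cross term must be absorbed rather than assumed away.
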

\begin{proof}
  For a set $X$ of configurations and a single system
  configurations $c$ denote by $E(c,X)$ the expected value for the
  number of transitions from $x$ to a state in $X$. Denote by ${\cal
    L}$ the set of legitimate system states. Then

\begin{eqnarray*}
  E(I, {\cal L}) &=& \sum_{e\in T(I, {\cal L})}
  l(e)p(e)\\
&=& \sum_{x\in F} \sum_{e_1\in T(I, x)} \sum_{e_2\in T(x
 , {\cal L})} (l(e_1) + l(e_2))p(e_1)p(e_2)\\
&=& \sum_{x\in F} \sum_{e_1\in T(I, x)} \left(l(e_1)p(e_1)\sum_{e_2\in T(x
 , {\cal L})} p(e_2) + p(e_1)\sum_{e_2\in T(x
 , {\cal L})}l(e_2)p(e_2)\right)\\
&=& \sum_{x\in F} \sum_{e_1\in T(I, x)} \left(l(e_1)p(e_1) + p(e_1)E(x, {\cal L})\right)
\end{eqnarray*}
\begin{eqnarray*}
&=& \sum_{x\in F} \left(E(I,x) + \sum_{e_1\in T(I
   , x)} p(e_1)E(x, {\cal L})\right)\\
&\le& E(I, F) + \max \{E(x, {\cal L}) \mid x\in F\} \sum_{e_1\in T(I
   , F)} p(e_1)\\
&=&E(I, F) + \max \{E(x, {\cal L}) \mid x\in F\} \le 5 + \frac{1}{\ln 2}
  H_{\delta_i(v)}+1/2
\end{eqnarray*}

The last step uses Lemma~\ref{lem:bound5} and
\ref{lem:cor_mess}. The bound on the variance is proved similarly.
\end{proof}

Theorem~\ref{theo:stab_time}, Lemma~\ref{lem:exp_memory},
\ref{lem:cor_mess}, and \ref{lem:contain} prove the following
Theorem.

\begin{theorem} \label{theo:contain} \AC is a self-stabilizing
  algorithm for computing a $(\Delta+1)$-coloring in the synchronous
  model within $O(\log n)$ time with high probability. It uses
  messages of size $O(\log n)$ and requires $O(\log n)$ storage per
  node. With respect to memory and message corruption it has
  contamination radius $1$. The expected containment time is at most
  $\frac{1}{\ln 2} H_{\Delta_i}+11/2$ with variance less than $7.5$.
\end{theorem}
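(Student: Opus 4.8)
The plan is to recognize that Theorem~\ref{theo:contain} merely consolidates results already proven, so the proof amounts to assembling the individual claims and passing to the worst case over the two fault types and over all nodes.

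First I would dispatch the structural claims. Self-stabilization, the $O(\log n)$ stabilization time with high probability, and contamination radius $1$ under memory corruption are exactly Theorem~\ref{theo:stab_time}; contamination radius $1$ under message corruption is Lemma~\ref{lem:contain}. The message- and storage-size claims are read directly off Algorithm~\ref{alg:col}: the persistent per-node state is the pair $(c,final)$, where a color in $\{0,1,\ldots,\Delta\}\cup\{\bot\}$ needs $O(\log\Delta)=O(\log n)$ bits and $final$ is a single bit, and precisely this pair is broadcast each round.

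For the quantitative bounds I would view the containment time as the worst case over the faulty node and over the (mutually exclusive) fault type. Lemma~\ref{lem:cor_mess} bounds the expected recovery time after a message corruption at $v$ by $\frac{1}{\ln 2}H_{\delta_i(v)}+1/2$ with variance at most $\approx 3.6737$, and Lemma~\ref{lem:exp_memory} bounds the expected recovery time after a memory corruption at $v$ by $\frac{1}{\ln 2}H_{\delta_i(v)}+11/2$ with variance less than $7.5$. Since $H$ is monotone and $\delta_i(v)\le\Delta_i$ for every $v$ by the definition $\Delta_i=\max_v\delta_i(v)$, substituting $\Delta_i$ for $\delta_i(v)$ gives a bound uniform in $v$; taking the larger additive constant ($11/2>1/2$) yields expected containment time at most $\frac{1}{\ln 2}H_{\Delta_i}+11/2$, and the larger variance bound ($7.5>3.6737$) yields variance below $7.5$.

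I do not expect any genuine technical obstacle here: every ingredient is already in hand, and the only step that needs to be stated explicitly is that the overall bound is the pointwise maximum of the two fault-type bounds. This is immediate because a single transient fault is either a memory corruption or a message corruption, each of which the earlier lemmas already analyse against the adversarial worst case, and because the harmonic numbers are monotone, so the node-wise maxima combine into the stated uniform constants.
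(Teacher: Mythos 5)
Your proposal is correct and matches the paper exactly: the paper itself offers no separate argument for Theorem~\ref{theo:contain}, stating only that it follows from Theorem~\ref{theo:stab_time} and Lemmas~\ref{lem:exp_memory}, \ref{lem:cor_mess}, and \ref{lem:contain}, which is precisely your assembly of the two fault-type bounds, the monotonicity $\delta_i(v)\le\Delta_i$, and the choice of the larger constants $11/2$ and $7.5$. Your explicit justification of the $O(\log n)$ message and storage sizes from the pair $(c,\mathit{final})$ is a small addition the paper leaves implicit, but it is the obvious intended reading.
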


\begin{corollary}
  Algorithm \AC has expected containment time $O(1)$ for
  bounded-indepen\-dence graphs. For unit disc graphs this time is at
  most $8.8$.
\end{corollary}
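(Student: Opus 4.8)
The plan is to reduce both claims to a single quantity, the maximum independent degree $\Delta_i(G)$, and then invoke Theorem~\ref{theo:contain}. That theorem already expresses the expected containment time as at most $\frac{1}{\ln 2}H_{\Delta_i}+11/2$, a bound that depends on the graph only through $\Delta_i$. Since $H_{\Delta_i}=\Theta(\log\Delta_i)$, it suffices to show that $\Delta_i$ is bounded by a constant depending only on the graph class, not on $n$, in each of the two cases; the $O(1)$ claim and the explicit numerical bound then follow by direct substitution.

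First I would treat bounded-independence graphs. By definition of this class there is a bounding function $f$, independent of $n$, such that every independent set contained in the $r$-hop neighborhood of any node has size at most $f(r)$. Applying this with $r=1$ and recalling that $\delta_i(v)$ is precisely the size of a maximum independent set inside $N(v)$, one gets $\delta_i(v)\le f(1)$ for every $v$, hence $\Delta_i\le f(1)=O(1)$. Plugging this into Theorem~\ref{theo:contain} bounds the expected containment time by the constant $\frac{1}{\ln 2}H_{f(1)}+11/2$, which is $O(1)$ as claimed.

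For unit disc graphs the task becomes the concrete packing estimate $\Delta_i\le 5$. The plan is a standard angular pigeonhole argument: fix $v$, place it at the origin, and note that every node of an independent set $A\subseteq N(v)$ lies within Euclidean distance $1$ of $v$ while being pairwise at distance strictly greater than $1$. Sorting the members of $A$ by their angle around $v$, the angular gaps sum to $360^\circ$, so if $|A|\ge 6$ some consecutive pair subtends an angle $\theta\le 60^\circ$. For such a pair, writing $r_1,r_2\in(0,1]$ for the two radii, the law of cosines gives squared distance $r_1^2+r_2^2-2r_1r_2\cos\theta\le r_1^2+r_2^2-r_1r_2\le 1$, where the last inequality is the maximum of this quadratic over $[0,1]^2$. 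Thus the pair is at distance at most $1$, i.e.\ adjacent, contradicting independence. Hence $\delta_i(v)\le 5$ for all $v$ and $\Delta_i\le 5$. Substituting into Theorem~\ref{theo:contain} yields $\frac{1}{\ln 2}H_5+\frac{11}{2}\approx 8.794\le 8.8$.

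The only genuinely nontrivial step is the geometric claim $\Delta_i\le 5$, and within it the bound $r_1^2+r_2^2-r_1r_2\le 1$ on the unit square, which makes the angle $60^\circ$ the exact threshold at which two neighbors are forced to be adjacent. I expect this to be the main obstacle to state cleanly, since one must verify that the inequality holds on the whole square (its maximum is attained on the boundary, not the interior) so that the boundary case $\theta=60^\circ$ is still covered. Everything else is either a direct appeal to the definition of the graph class or the arithmetic evaluation of $\frac{1}{\ln 2}H_5+11/2$, which is routine.
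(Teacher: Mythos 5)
Your proposal is correct and takes essentially the same route as the paper: reduce both claims to a constant bound on $\Delta_i$, then substitute into the bound $\frac{1}{\ln 2}H_{\Delta_i}+11/2$ of Theorem~\ref{theo:contain}, with $\frac{1}{\ln 2}H_5+\frac{11}{2}\approx 8.79\le 8.8$. The paper's proof simply asserts $\Delta_i\in O(1)$ for bounded-independence graphs and $\Delta_i\le 5$ for unit disc graphs as known facts, whereas you correctly fill in these details (the definition with $r=1$, and the standard $60^\circ$ angular pigeonhole packing argument, whose boundary case you handle properly since distance exactly $1$ still means adjacency); this is an elaboration, not a different approach.
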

 \begin{proof}
For these graphs $\Delta_i\in O(1)$, in particular
$\Delta_i\le 5$ for unit disc graphs.
 \end{proof}

\section{Conclusion}
\label{sec:conclusion}

This paper presented techniques to derive upper bounds for the mean
time to recover of a single fault for self-stabilizing algorithms in
the synchronous message passing model. For a new $\Delta+1$-coloring
algorithm we analytically derive a bound of $\frac{1}{\ln 2}
H_{\Delta_i}+11/2$ for the expected containment and showed that the
variance less than $7.5$. We believe that the technique can also be
applied to other self-stabilizing algorithms.

\subsection*{Acknowledgments} Research was funded by Deutsche
Forschungsgemeinschaft DFG (TU 221/6-1).

\bibliographystyle{plain}
\bibliography{markov}
\end{document}